\newtheorem{theorem}{Theorem}[section]
\newtheorem{lemma}[theorem]{Lemma}
\newtheorem{corollary}[theorem]{Corollary}
\newtheorem{prop}[theorem]{Proposition}
\theoremstyle{definition}
\newtheorem{definition}[theorem]{Definition}
\theoremstyle{remark}
\newtheorem{remark}[theorem]{Remark}
\numberwithin{equation}{section}
\newcommand{\calA}{\mathcal{A}}
\newcommand{\calH}{\mathcal{H}}
\newcommand{\calR}{\mathcal{R}}
\newcommand{\alg}[1]{\mathcal{#1}}
\newcommand{\diam}[0]{\operatorname{diam}}
\newcommand{\caA}{{\mathcal A}}
\newcommand{\caB}{{\mathcal B}}
\newcommand{\caH}{{\mathcal H}}
\newcommand{\caK}{{\mathcal K}}
\newcommand{\caP}{{\mathcal P}}
\newcommand{\caU}{{\mathcal U}}
\newcommand{\bbN}{{\mathbb N}}
\newcommand{\bbR}{{\mathbb R}}
\newcommand{\nan}{\mathbb{N}}
\newcommand{\lv}{\left \vert}
\newcommand{\rv}{\right \vert}
\newcommand{\lV}{\left \Vert}
\newcommand{\rV}{\right \Vert}
\newcommand{\lmk}{\left (}
\newcommand{\rmk}{\right )}
\newcommand{\pg}{\caP_{0}(\Gamma)}
\newcommand{\unit}{\mathbb I}
\newcommand{\id}{\mathop{\mathrm{id}}\nolimits}
\newcommand{\Ad}{\mathop{\mathrm{Ad}}\nolimits}
\begin{document}
\title{The split and approximate split property in 2D systems: stability and absence of superselection sectors}

\author[1]{Pieter Naaijkens}
\affil[1]{School of Mathematics, Cardiff University, United Kingdom}
\author[2]{Yoshiko Ogata}
\affil[2]{Graduate School of Mathematical Sciences, The University of Tokyo, Japan}

\date{\today}

\maketitle

\begin{abstract}
The split property of a pure state for a certain cut of a quantum spin system can be understood 
as the entanglement between the two subsystems being weak.
From this point of view, we may say that if it is not possible to transform a state $\omega$
via sufficiently local automorphisms (in a sense that we will make precise)
into a state satisfying the split property,
then the state $\omega$ has a long-range entanglement.
It is well known that in 1D, gapped ground states have the split property with respect to cutting the system 
into left and right half-chains.
In 2D, however, the split property fails to hold for interesting models
such as Kitaev's toric code. In fact, we will show that this failure is the reason that 
anyons can exist in that model.
 
There is a folklore saying that the existence of
anyons, like in the toric code model, implies long-range entanglement of the state.
In this paper, we prove this folklore in an infinite dimensional setting.
More precisely, we show that long-range entanglement, in a way that we will define precisely, is a necessary condition to have non-trivial superselection sectors.
Anyons in particular give rise to such non-trivial sectors.
States with the split property for cones, on the other hand, do not admit non-trivial sectors.

A key technical ingredient of our proof is that under suitable assumptions on locality, the automorphisms generated by local interactions can be ``approximately factorized.''
    That is, they can be written as the tensor product of automorphisms localized in a cone and its complement respectively, followed by an automorphism acting near the ``boundary'' of $\Lambda$, and conjugation with a unitary.
This result may be of independent interest.
This technique also allows us to prove that the \emph{approximate} split property, a weaker version of the split property that \emph{is} satisfied in e.g. the toric code, is stable under applying such automorphisms.
\end{abstract}

\section{Introduction}
A pair $(N,M)$ of commuting von Neumann algebras is called \emph{split} if there is a Type I factor $F$ such that $N \subset F \subset M'$~\cite{DoplicherL}.
In applications to physics typically $N$ and $M$ are generated by local observables located in two disjoint (or, in relativistic theories, spacelike separated) regions $\Lambda_1$ and $\Lambda_2$.
The split property then can be interpreted as a type of statistical independence between regions.
More precisely, one can locally prepare a normal state $\varphi$ such that restricted to measurements in $\Lambda_i$ we have $\varphi(AB) = \varphi_1(A)\varphi_2(B)$, for given normal states $\varphi_i$ on the algebra generated by observables localized in $\Lambda_i$~\cite{Werner}.
In particular, it means that there is no entanglement between the two parts.
Alternatively, the Type I factor allows us to find a tensor product decomposition of the Hilbert space, with the algebras $N$ and $M$ acting on distinct factors.
Such a decomposition is far from obvious in systems with infinitely many degrees of freedom and may even not exist for a given bipartition of the system.
Early applications have been in algebraic quantum field theory~\cite{BuchholzW}, for example in the study of entanglement properties of the vacuum~\cite{SummersW}.

More recently the split property has found applications in the classification of phases of 1D gapped quantum spin systems.
Under quite general conditions one can show that the split property holds in ground state representations.
In particular, Matsui~\cite{Matsui13} showed that if $\omega$ is a pure ground state of a gapped local Hamiltonian (on the chain), it satisfies the split property in the sense that $\omega$ is quasi-equivalent to $\omega_L \otimes \omega_R$. 
Here $\omega_L$ (resp. $\omega_R$) is the ground state restricted to the left (resp. right) half-chain $\caA_L$ ($\caA_R$).
In this case this is equivalent to saying that the inclusion $\pi_\omega(\caA_L)'' \subset \pi_\omega(\caA_R)'$ is split in the sense above, where $\pi_\omega$ is a GNS representation for $\omega$~\cite{Matsui01} (see also~\cite[Remark 1.5]{Ogata19b}).

The split property can then be used to define a $H^2(G, U(1))$-index for a unique gapped ground state on
a quantum spin chain with finite group on-site symmetry ~\cite{Ogata19a}, as well as
${\mathbb Z}_2$-valued index for reflection  symmetry, generalizing a construction by Pollmann \emph{et al.}~\cite{PollmannTBO} for matrix product states.
The index was used to prove a general Lieb-Schultz-Mattis type theorem in \cite{OgataTT}.

For fermionic chains, the split property for a unique gapped ground state  is proven in \cite{Matsui20}.
Bourne and Schulz-Baldes and independently  Matsui
introduced a $\mathbb{Z}_2$-index for \emph{fermionic} chains {\it without symmetry}~\cite{BourneSB,Matsui20}.
A classification of SPT-phases with on-site symmetry in 1D fermionic chain based on the split property
was carried out in ~\cite{BO}. There, a ${\mathbb Z}_2\times H^1(G,{\mathbb Z}_2)\times H^2(G, U(1)_{\mathfrak p})$-valued
index was found using the split property.

The split property is essential in all these constructions: it allows one to factor the Hilbert space into a tensor product with the left half-chain acting on one factor, and the right half-chain on the other.
The Type I factor $F$ is such that $F \simeq \alg{B}(\mathcal{H}_L) \otimes I$ with respect to this decomposition.
This can then be used to extend a symmetry $\beta_L$ of the spin chain to an automorphism of $F$, which by Wigner's theorem can be implemented by a (anti-)unitary.
This in turn can be used to define an index.

In higher dimensions the situation is much more complicated, and the split property fails to hold in interesting models.
For example, consider Kitaev's toric code model~\cite{KitaevQD}.
Then one can consider a cone-like region (extending to infinity) $\Lambda$ and its complement, as an analogue of the two half-chains in 1D.
It turns out that the translation invariant ground state $\omega$ of the toric code is \emph{not} split with respect to this bipartition~\cite{Naaijkens12,FiedlerN}, in contrast with the 1D case discussed above.
In fact, one of the goals of the present work is to argue that the failure of the split property to hold is in fact necessary to get anyonic excitations.
More precisely, the failure of the split property to hold is because the state is long-range entangled.
Thus, our work confirms the folklore statement that long-range entanglement is a necessary condition for anyonic excitations.

It turns out that at least for abelian quantum double models a weaker version of the split property \emph{is} true.
That is, if one considers a pair of cones $\Lambda_1 \subset \Lambda_2$ whose boundaries are sufficiently far apart, there is a Type I factor $F$ such that $\pi_\omega(\alg{A}_{\Lambda_1})'' \subset F \subset \pi_\omega(\alg{A}_{\Lambda_2^c})'$~\cite{FiedlerN}.
``Sufficiently far'' depends on the model: in the abelian quantum double models, it is enough that the distance between their boundaries is greater than one.
In general, and in this paper as well, we need in addition that $\Lambda_2$ has a wider opening angle than $\Lambda_1.$
This should be compared with the setting in relativistic quantum field theory mentioned earlier, where the split property fails if the intersection of the closures of the two regions has non-empty intersection, but holds when they are spacelike separated.
This property is sometimes called the \emph{distal} or \emph{approximate} split property to distinguish it from the situation in e.g. 1D systems.
Despite being weaker than the split property, it still has important applications.
For example, in two dimensional systems the approximate split property is one of the assumptions used in relating the total quantum dimension (a property of the superselection sectors) to the index of a certain subfactor~\cite{NaaijkensKL}.
This result can be used to show one has found all superselection sectors of a given model.
A variant also plays a role in the discussion of ``approximately localized'' superselection sectors~\cite{ChaNN18}.

The interest of this paper is in these split and approximate split 
properties in 2D quantum spin systems.
Although most of our results can be straightforwardly generalised to higher dimensions, we restrict to 2D.
The reason is that we are particularly interested in applying our results to study anyons, and in higher dimensions the cone-localized sectors we consider automatically have bosonic or fermionic statistics (cf.~\cite{BuchholzF}).
We regard a state with the split property as having
small entanglement with respect to the given cut.
From this point of view, a 
state  which cannot be transformed 
into a split state via quasi-local automorphisms
has long-range entanglement.
Or to be more precise, we consider a slightly more restrictive class of automorphisms which we call \emph{quasi-factorizable}.
(See subsection \ref{qasubsec} for the definition of quasi-local automorphisms and their importance 
in  the theory of gapped ground state phases.)
Anyons, if they exist, can be identified with superselection sectors of the model (see Section~\ref{sec:select} for an introduction).
 We show that the existence of a non-trivial superselection sector of a state $\omega$ implies
 that the state $\omega$ is long-range entangled.
That is, long-range entanglement is a necessary condition to have non-trivial anyons.
Moreover, this is stable under applying ``quasi-factorizable'' automorphisms, defined below.
  For a class of Hamiltonians consisting of local commuting projectors, Haah~\cite{Haah} introduced 
 an ingenious index such that it having a non-trivial value implies 
 that one needs a quantum circuit with depth on the order of the system size to transform into product states.
 Our result is in accordance with these results.
In general, the split property itself in 2D is not stable under quasi-local automorphisms.
We show, however,
 the approximate split property is stable under it.

 The key technical ingredient for the proof is a factorization property
of quasi-local automorphisms $\alpha_s$.
This result may be of independent interest.
More precisely, we show that under mild assumptions, $\alpha_s$ is \emph{quasi-factorizable} in the following sense.
In the definition below, $\Gamma$ is the set of all sites of the system, and for any subset $\Lambda \subseteq \Gamma$, $\calA_\Lambda$ is the corresponding quasi-local $C^*$-algebra of observables localized in $\Lambda$ (see below).
\begin{definition}\label{def:quasifactor}
Let $\alpha$ be an automorphism of $\calA_\Gamma$ and consider an inclusion of cones
\[
        \Gamma_1' \subset \Lambda \subset \Gamma_2' \subset \Gamma.
\]
We say that $\alpha$ is \emph{quasi-factorizable} with respect to this inclusion if there is a unitary $u \in \calA$ and automorphisms $\alpha_\Lambda$ and $\alpha_{\Lambda^c}$ of $\calA_{\Lambda}$ and $\calA_{\Lambda^c}$ respectively, such that 
\[
    \alpha = \Ad(u) \circ \widetilde{\Xi} \circ (\alpha_{\Lambda} \otimes \alpha_{\Lambda^c}),
\]
where $\Xi$ is an automorphism on $\Gamma_2' \setminus \Gamma_1'$ and $\Lambda^c := \Gamma \setminus \Lambda$.
\end{definition}
The key advantage is that one can replace the ``exponential tails'' of $\alpha_s$ by strict locality, up to conjugation with a unitary in $\alg{A}_\Gamma$.
In for example the sector theory, such strict locality is very useful, and one is only interested in representations up to unitary equivalence.

This factorization property was first used in ~\cite{Ogata19a},
in proving the stability of the index of 1D SPT.
Following this idea, in ~\cite{Moon} the stability of split property in 1D was shown.
Its $2$-dimensional version is essential here,
but an extra complication is that in 2D or higher, the boundary between the regions we will consider is infinite.
This makes locality estimates much more subtle.
Coincidentally, this more complicated geometry is also a key reason why Matsui's result on the split property for 1D spin chains~\cite{Matsui13} does not generalize to higher dimensions.
A special case of the 2D-version (with respect to cone like regions with common apex) of the factorization property is also used in
~\cite{Ogata21}, to define a $H^{3}(G,U(1))$-valued index and to show its stability.

In Section~\ref{sec:prelim} we fix notation and recall some basic facts about Lieb-Robinson bounds and quasi-local maps, and give a brief overview of the relation between anyons and superselection sectors.
Then, in Section~\ref{sec:stable}, we prove the factorization property of
quasi-local automorphisms in a general setting.
In Section~\ref{sec:lre} we consider states in 2D which are quasi-equivalent to a product state, and hence satisfy the strict split property.
In particular, we show that the states in this gapped quantum phase have trivial superselection structure.
Finally, in Section~\ref{sec:approxsplit} we show that our main technical result applies to a natural class of quasi-local automorphisms, and use this to show that the approximate split property is stable in such models.

\emph{Acknowledgments.} PN was supported in part by funding from the European Union’s Horizon 2020 research and innovation program under the European Research Council (ERC) Consolidator Grant GAPS (No. 648913). YO is supported is supported in part by JSPS KAKENHI Grant Number 16K05171 and 19K03534.
She was also supported by JST CREST Grant Number JPMJCR19T2.

\section{Preliminaries}\label{sec:prelim}
We first fix the setting and introduce the main definitions.
A key part is played by quasi-local maps and Lieb-Robinson bounds.
For a state-of-the-art overview of the topic see~\cite{NSY};
for our purpose the most relevant facts will be recalled here.
We largely adopt the notation of~\cite{NSY}.
We assume basic familiarity with the operator algebraic formulation of quantum spin systems (see e.g.~\cite{BratteliR1,BratteliR2}).

Let $(\Gamma, d)$ be a countable metric space
which is $\nu$-regular i.e.,
\begin{align}\label{nreg}
\sup_{x\in \Gamma}\lv b_{x}(n)\rv\le \kappa n^{\nu},\quad 1\le n\in \bbN,
\end{align}
for some constant $\kappa>0$.
Here, we used the notation
\begin{align}
 b_{x}(n):=\left\{
 y\in\Gamma \mid d(x,y)\le n
 \right\}.
\end{align}
In concrete applications we typically consider $\Gamma = \mathbb{Z}^\nu$ (or its edges) with the usual metric, but for now we keep the discussion as general as possible.

Let $\caP_{0}(\Gamma)$ be
the set of all finite subsets of $\Gamma$.
For $\Lambda \in \caP_{0}(\Gamma)$ we set
\begin{align}\label{ag}
\caA_{\Lambda}:=\bigotimes_{x\in\Lambda}\caB(\caH_{x}),
\end{align}
where $\caH_{x}$ are finite dimensional Hilbert spaces whose dimensions are uniformly bounded:
\begin{align}\label{agg}
\sup_{x\in\Gamma} \dim \caH_{x}<\infty.
\end{align}
If $\Lambda_1 \subset \Lambda_2$ there is a natural inclusion of algebras, and hence we can write
\begin{align}
\caA_{\Gamma}^{\rm loc} :=\bigcup_{\Lambda\in \caP_{0}(\Gamma)}\caA_{\Lambda}
\end{align}
for the algebra of local observables.
To get the $C^*$-algebra $\caA_{\Gamma}$ of quasi-local observables we take the norm closure of $\caA_{\Gamma}$.
In general, if $\Lambda \subset \Gamma$ is any subset of $\Gamma$, $\caA_\Lambda$ is the norm closure of $\bigcup_{\Lambda_0 \subset \Lambda, \Lambda_0 \in \caP_{0}(\Gamma)} \caA_{\Lambda_0} $.
We denote by $ \caU\lmk \caA_{\Gamma}\rmk$ the set of all unitaries in $ \caA_{\Gamma}$.

For any subset $X$ of $\Gamma$, we denote by $\Pi_{X}$ the conditional expectation
onto $\caA_{X}$ given by the tracial state on $\caA_{X^{c}}$.
These maps will be used to approximate quasi-local observables by local ones.
For any $m\in\nan\cup \{0\}$ and $X\subset\Gamma$, we set
\begin{align}
X(m):=\left\{ x\in\Gamma\mid d(x,X)\le m\right\}.
\end{align}
Furthermore, we define
\begin{align}
\Delta_{X(m)}:=\Pi_{X(m)}-\Pi_{X(m-1)},\quad m\in\nan,\; X\subset \Gamma.
\end{align}
Note that we have
\begin{align}
\lV
\Delta_{X(m)}\lmk A\rmk
\rV
\le 2 \lV A\rV,\quad A\in\caA_{\Gamma},
\end{align}
since $\Pi$ is a projection.

\subsection{Split property}
We will be interested in the split property with respect to different regions of $\Gamma$, leading to the following definition.
\begin{definition}
\label{defn:split}
Let $\Gamma_1 \subset \Gamma_2 \subset \Gamma$ and $\omega$ a pure state of $\caA_\Gamma$.
Then we say that $\omega$ is \emph{split} with respect to the inclusion $\Gamma_1 \subset \Gamma_2$ if there is a Type I factor $F$ such that
\begin{equation}
	\pi( \caA_{\Gamma_1} )'' \subset F \subset \pi( \caA_{\Gamma_2})'', 
\end{equation}
where $\pi$ is a GNS representation for $\omega$.
\end{definition}

Conjugating with a unitary does not affect the split property.
Furthermore, one would expect that automorphisms of $\caA_{\Gamma_1}$ and $\caA_{\Gamma_2^c}$ have no effect on the existence of the Type I factor $F$.
We can even allow for a non-trivial automorphism on a ``widening'' of the region $\Gamma_2 \setminus \Gamma_1$, at the expense of shrinking (resp. growing) the two regions in the definition of the split property.
This is the idea behind the next proposition.
\begin{prop}\label{prop:splitstable}
Let $\Gamma_0\subset\Gamma_1\subset\Gamma_2\subset\Gamma_3$
be a sequence of subsets in $\Gamma$.
Let $\omega$ be a pure state on
$\caA_{\Gamma}$ and suppose that it is split with respect to $\Gamma_1 \subset \Gamma_2$.
Let $\alpha$ be an automorphism of $\caA_{\Gamma}$.
Let  $\alpha_{\Gamma_2^c}$,
$\alpha_{\Gamma_2\setminus \Gamma_1}$,
$\alpha_{\Gamma_1}$ be automorphisms
of $\caA_{\Gamma_2^c}$,
$\caA_{\Gamma_2\setminus \Gamma_1}$,
$\caA_{\Gamma_1}$
respectively.
Define an automorphism $\tilde\alpha$ of $\caA_{\Gamma}$ by
\begin{align}
\tilde \alpha:=\alpha_{\Gamma_2^c}\otimes \alpha_{\Gamma_2\setminus \Gamma_1}\otimes
\alpha_{\Gamma_1}.
\end{align}
Suppose moreover that there is an automorphism
$\beta_{\Gamma_3\setminus \Gamma_0}$ of $\caA_{\Gamma_3\setminus \Gamma_0}$
and a unitary $u\in\caA_{\Gamma}$ such that
\begin{align}\label{eq:factorize}
\alpha=\Ad(u)\circ
\tilde\alpha\circ\lmk\tilde  \beta_{\Gamma_3\setminus \Gamma_0}\rmk,
\end{align}
where $\tilde  \beta_{\Gamma_3\setminus \Gamma_0} =\beta_{\Gamma_3\setminus \Gamma_0}\otimes{\id_{(\Gamma_3\setminus \Gamma_0)^c}}$.
Then $\omega \circ \alpha$ is split for the inclusion $\Gamma_0 \subset \Gamma_3$.
In fact, if $(\caH,\pi,\Omega)$ is a GNS triple of $\omega$ and $F$ the interpolating factor from Def.~\ref{defn:split},
we can choose $\tilde F=\Ad(\pi(u))(F)$ as the interpolating Type I factor in the GNS representation $\pi \circ \alpha$ for the state $\omega \circ \alpha$.
\end{prop}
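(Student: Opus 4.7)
My plan is to verify the two inclusions in Definition~\ref{defn:split} directly, by unpacking the factorization \eqref{eq:factorize} and tracking how $\tilde\alpha\circ\tilde\beta_{\Gamma_3\setminus\Gamma_0}$ acts on $\caA_{\Gamma_0}$ and on $\caA_{\Gamma_3}$ separately. A GNS triple for $\omega\circ\alpha$ is $(\caH,\pi\circ\alpha,\Omega)$, and for any subset $\Lambda\subset\Gamma$ one has
\begin{equation*}
(\pi\circ\alpha)(\caA_\Lambda)'' \;=\; \Ad(\pi(u))\Bigl(\pi\bigl(\tilde\alpha(\tilde\beta_{\Gamma_3\setminus\Gamma_0}(\caA_\Lambda))\bigr)''\Bigr).
\end{equation*}
Since $\Ad(\pi(u))$ is an inner automorphism of $\caB(\caH)$, $\tilde F$ is again a Type I factor, and the problem reduces to showing
\begin{equation*}
\pi\bigl(\tilde\alpha(\tilde\beta_{\Gamma_3\setminus\Gamma_0}(\caA_{\Gamma_0}))\bigr)'' \subset F \subset \pi\bigl(\tilde\alpha(\tilde\beta_{\Gamma_3\setminus\Gamma_0}(\caA_{\Gamma_3}))\bigr)''.
\end{equation*}
From there the assumed split inclusion $\pi(\caA_{\Gamma_1})''\subset F\subset\pi(\caA_{\Gamma_2})''$ closes each side.

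For the lower inclusion, I would observe that both constituents of $\tilde\alpha\circ\tilde\beta_{\Gamma_3\setminus\Gamma_0}$ act on $\caA_{\Gamma_0}$ in a very constrained way. Because $\Gamma_0$ is disjoint from $\Gamma_3\setminus\Gamma_0$, the automorphism $\tilde\beta_{\Gamma_3\setminus\Gamma_0}$ is the identity on $\caA_{\Gamma_0}$. Since $\Gamma_0\subset\Gamma_1$, the inclusion $\caA_{\Gamma_0}\subset\caA_{\Gamma_1}$ lets us identify $\tilde\alpha|_{\caA_{\Gamma_0}}$ with $\alpha_{\Gamma_1}|_{\caA_{\Gamma_0}}$, so $\tilde\alpha(\caA_{\Gamma_0})\subset\alpha_{\Gamma_1}(\caA_{\Gamma_1})=\caA_{\Gamma_1}$. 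Taking bicommutants and invoking $\pi(\caA_{\Gamma_1})''\subset F$ yields the lower inclusion.

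For the upper inclusion, I would argue that $\tilde\alpha\circ\tilde\beta_{\Gamma_3\setminus\Gamma_0}$ maps $\caA_{\Gamma_3}$ onto a subalgebra containing $\caA_{\Gamma_2}$. Since $\Gamma_3\setminus\Gamma_0\subset\Gamma_3$, the decomposition $\caA_{\Gamma_3}=\caA_{\Gamma_0}\otimes\caA_{\Gamma_3\setminus\Gamma_0}$ shows that $\tilde\beta_{\Gamma_3\setminus\Gamma_0}$ is an automorphism of $\caA_{\Gamma_3}$, so $\tilde\beta_{\Gamma_3\setminus\Gamma_0}(\caA_{\Gamma_3})=\caA_{\Gamma_3}$. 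Next, $\caA_{\Gamma_2}=\caA_{\Gamma_1}\otimes\caA_{\Gamma_2\setminus\Gamma_1}$ is contained in $\caA_{\Gamma_3}$, and $\tilde\alpha$ restricts to the automorphism $\alpha_{\Gamma_1}\otimes\alpha_{\Gamma_2\setminus\Gamma_1}$ of $\caA_{\Gamma_2}$, giving $\tilde\alpha(\caA_{\Gamma_3})\supset\tilde\alpha(\caA_{\Gamma_2})=\caA_{\Gamma_2}$. Taking bicommutants and invoking $F\subset\pi(\caA_{\Gamma_2})''$ gives the upper inclusion. The whole proof is essentially a bookkeeping exercise, and I do not anticipate any real obstacle: the hypotheses on the nesting $\Gamma_0\subset\Gamma_1\subset\Gamma_2\subset\Gamma_3$ and on the supports of $\tilde\alpha$ and $\tilde\beta_{\Gamma_3\setminus\Gamma_0}$ are exactly tuned so that on the $\Gamma_0$-side $\tilde\beta$ vanishes and $\tilde\alpha$ shrinks the algebra into $\caA_{\Gamma_1}$, while on the $\Gamma_3$-side $\tilde\beta$ preserves the algebra and $\tilde\alpha$ retains $\caA_{\Gamma_2}$ as a subalgebra; the mild asymmetry between the two sides is the only point demanding care.
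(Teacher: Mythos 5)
Your proposal is correct and follows essentially the same route as the paper: on the $\Gamma_0$ side you use that $\tilde\beta_{\Gamma_3\setminus\Gamma_0}$ is trivial and $\tilde\alpha(\caA_{\Gamma_0})\subset\caA_{\Gamma_1}$, on the $\Gamma_3$ side that $\tilde\beta_{\Gamma_3\setminus\Gamma_0}$ and $\tilde\alpha$ preserve $\caA_{\Gamma_3}$ and $\caA_{\Gamma_2}$ respectively, and then sandwich $F$ and conjugate by $\pi(u)$ to get $\tilde F=\Ad(\pi(u))(F)$. The only difference is cosmetic (you conjugate at the outset and phrase the upper inclusion via $\tilde\alpha(\caA_{\Gamma_2})=\caA_{\Gamma_2}$ rather than $\tilde\alpha^{-1}(\caA_{\Gamma_2})=\caA_{\Gamma_2}$ followed by taking commutants), so no further comparison is needed.
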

\begin{proof}
We have
\begin{align}
\pi\circ\tilde \alpha\lmk\caA_{\Gamma_0}\rmk''
=\pi\circ\alpha_{\Gamma_1}\lmk\caA_{\Gamma_0}\rmk''
\subset \pi\lmk\caA_{\Gamma_1}\rmk''\subset F.
\end{align}
We also have
$
 \tilde\alpha^{-1} (\caA_{\Gamma_2})
= 
 \caA_{\Gamma_2}
\subset \caA_{\Gamma_3}$,
and hence 
$\pi\lmk \caA_{\Gamma_2}\rmk
\subset \pi \circ \tilde\alpha\lmk \caA_{\Gamma_3}\rmk$.
Therefore we have
\begin{align}
\pi\circ\tilde \alpha(\caA_{\Gamma_3})'
\subset \pi(\caA_{\Gamma_2})'
\subset 
F',
\end{align}
and by taking commutants
$F\subset
\pi\circ\tilde \alpha(\caA_{\Gamma_3})''$.
Hence we obtain
\begin{align}
\pi\circ\tilde \alpha\lmk\caA_{\Gamma_0}\rmk''\subset 
F\subset
\pi\circ\tilde \alpha(\caA_{\Gamma_3})''.
\end{align}
Note that by assumption and the fact that $\tilde{\beta}_{\Gamma_3\setminus \Gamma_0}$ acts trivially on $\caA_{\Gamma_0}$,
\begin{align}
\alpha\lmk\caA_{\Gamma_0}\rmk
=\Ad(u)\circ
\tilde\alpha\circ\tilde\beta_{\Gamma_3\setminus \Gamma_0}\lmk
\caA_{\Gamma_0}\rmk
=\Ad(u)\circ
\tilde\alpha\circ\lmk
\caA_{\Gamma_0}\rmk,
\end{align}
and similar with $\caA_{\Gamma_0}$ replaced by $\caA_{\Gamma_3}$.
Hence we have
\begin{equation}
\begin{split}
\lmk \pi\circ\alpha\lmk\caA_{\Gamma_0}\rmk\rmk''
= & \Ad\lmk\pi(u)\rmk\lmk \pi\circ \tilde\alpha\lmk\caA_{\Gamma_0}\rmk''\rmk 
\subset \Ad\lmk\pi(u)\rmk\lmk F\rmk \\ 
&\subset \Ad\lmk\pi(u)\rmk\lmk  \lmk \pi\circ\tilde \alpha\lmk\caA_{\Gamma_3}\rmk\rmk''\rmk
=\lmk\pi\circ\alpha\lmk\caA_{\Gamma_3}\rmk\rmk''.
\end{split}
\end{equation}
This completes the proof.
\end{proof}

Note that the condition on $\alpha$ implies that $\alpha^{-1}$ is quasi-factorizable for the inclusion $\Gamma_0 \subset \Gamma_2 \subset \Gamma_3$ in the sense of Definition~\ref{def:quasifactor}.
The main technical contribution of the paper consists in proving that the quasi-local automorphisms $\alpha_s$ admit a decomposition as in~\eqref{eq:factorize} of the proposition.

\subsection{Sector theory}\label{sec:select}
The present work is at least partly motivated by superselection sector theory, in the sense of Doplicher, Haag and Roberts (DHR). See~\cite{Araki99,HaagLQP} for an introduction.
In two dimensional systems with long-range topological order, there is the possibility of quasi-particles with braided exchange statistics.
Typical examples of such models are Kitaev's quantum double models~\cite{KitaevQD} and the Levin-Wen string-net models~\cite{LevinW}.
Mathematically, the algebraic properties of the anyons are described by a braided tensor category~\cite{Wang}.
Thus, the question is how one can extract this tensor category from first principles.

Typical methods to extract the braided tensor category from a ground state rely quite heavily on certain properties (e.g. symmetries) of the underlying model, and are therefore less suitable for a general analysis.
In fact, in finite systems it is not always clear how to even define a single anyonic excitation, in particular once one loses strict locality as a result of perturbations.
We therefore take a different approach, motivated by DHR sector theory in algebraic quantum field theory~\cite{HaagLQP}, in which one in principle can recover the full anyon structure from a few general and physically motivated principles.
The idea of a superselection sector stems from the observation that it appears to be impossible to make coherent superpositions between certain states, in particular when they carry a different `charge' or `anyon type'.
Mathematically this phenomenon is related to the existence of non-equivalent representations of the algebra of observables.
One way to interpret this is to think of charge conservation: with local operations it is not possible to change the total charge of the system.
In particular, say we create a conjugate pair of anyons (thus preserving the total charge) from the ground state, and move one far away.
Then acting \emph{locally} the total charge in that region cannot be changed.
Or, to give an example, it is impossible to create a vector state describing a single charged anyon in the ground state representation of a topologically ordered model, using quasi-local observables only.
Equivalently, it is not possible to create coherent superpositions of disjoint states (cf.~\cite[Thm. 6.1]{Araki99}).

The $C^*$-algebra $\caA_\Gamma$ has many inequivalent representations, but most of them are not physically relevant.
Hence we need a selection criterion to select the relevant representations that correspond to charged states (that is, states describing single anyon excitations).
It is perhaps helpful to illustrate how this works in the prototypical example of the toric code~\cite{KitaevQD}.
We refer to~\cite{Naaijkens11,FiedlerN} for details on the following discussion.
In the thermodynamic limit, one can show that there is a translation invariant ground state, uniquely characterised by the condition that $\omega_0(A_s) = \omega_0(B_p) = 1$.
Here $A_s$ and $B_p$ are the `star' and `plaquette' operators appearing in the Hamiltonian for the toric code.
It is well-known that one can define `string operators' $F_\xi$ that create a pair of excitations (anyons) when acting on the ground state of the toric code.
Note that the excitations at the end of the path $\xi$ are conjugate to each other, so that the total charge of the anyons created by this operator is trivial.
Thus $A \mapsto \omega_0(F_\xi A F_\xi^*)$ is a state describing a pair of anyons.
To get a state describing a \emph{single} anyon, one can take the limit where one end of the path is sent off to infinity.
This converges, and one can show that the resulting state is inequivalent to $\omega_0$.
Moreover, by construction, this state can be interpreted as describing a single anyon, located at the endpoint that was kept fixed.

The corresponding GNS representation $\pi$ has additional properties, reminiscent of the topological charges in algebraic quantum field theory~\cite{BuchholzF}.
For example, suppose that the paths $\xi$ in the construction above all lie in some cone $\Lambda$.
Then it is easy to show that \emph{outside} of the cone the GNS representation for $\omega$ is unitarily equivalent to the ground state representation $\pi_0$.
This means that the anyon is localized in the cone $\Lambda$.
What is less obvious is that if we choose a path going off to infinity in a different direction, the corresponding GNS representation is unitarily equivalent to $\pi$.
The same is true if we choose a different endpoint for the path $\xi$.
This property ultimately boils down to the property of the toric code that the state $\omega_0(F_\xi A F_\xi^*)$ only depends on the endpoints of the path $\xi$, and not on the path itself.
To summarise, the single anyon representation $\pi$ is irreducible, and satisfies
\begin{equation}
	\label{eq:sselect}
        \pi_0|\calA_{\Lambda^c} \cong \pi|\calA_{\Lambda^c},
\end{equation}
for \emph{any} cone $\Lambda$.\footnote{The choice of cones as localization region is merely a convenient one, motivated by space-like cones in algebraic QFT~\cite{BuchholzF}.
        What is more important is that it extends to infinity.
        This allows us to send one of the ends of a ``string operator'' creating a pair of anyons in e.g. the toric code to infinity.
        For technical reasons, we need the region to ``widen'' towards infinity, so that any finite region can be transported into it, and that the region has no holes.
        An advantage of cones is that they are easy to parametrise, cf.~\cite{Ogata21a}.
}
Here $\pi_0$ is the (reference) ground state representation, and $\cong$ denotes unitary equivalence of the representation restricted to $\calA_{\Lambda^c}$, the observables localized outside of the cone $\Lambda$.
Since the criterion is required to hold for \emph{any} cone, the localization region can be moved around.
This is called \emph{transportability} of the charges, and we say that the charge is \emph{transportable}
(see e.g.~\cite[Section IV.2]{HaagLQP}).
For the toric code, it is straightforward to construct four different inequivalent representations that satisfy this property, corresponding to the four anyon types of the model.

For general topologically ordered models, one expects the charges to have the same localization properties (for example based on the string operators that are typical for such models).
Thus, in general, a reasonable approach is to take a ground state representation $\pi_0$, and identify irreducible representations $\pi$ satisfying~\eqref{eq:sselect} with the charges (or, anyons) of the theory.
A sector is then a (unitary) equivalence class of representations $\pi$ satisfying the selection criterion.
The \emph{trivial} sector is the equivalence class containing the reference representation $\pi_0$.
Later we will slightly relax the criterion~\eqref{eq:sselect} to require only quasi-equivalence.

It is perhaps surprising that by just imposing this single selection criterion, we obtain a very rich structure.
In fact, based on the DHR program and using a technical property called Haag duality, one can show that the set of representations satisfying this criterion has the structure of a braided tensor category~\cite{BuchholzF,Naaijkens11,Ogata21a}.
In addition, in concrete models such as the toric code there are natural candidates to construct representations $\pi$ satisfying the criterion, even without resorting to Haag duality, as outlined above.
Moreover, one can prove that these representations are the only ones satisfying the selection criterion~\eqref{eq:sselect}, and it follows that the category is equivalent to the representation of the quantum double of the group $G = \mathbb{Z}_2$, as expected~\cite{NaaijkensKL}.
This result can be generalised to abelian quantum double models~\cite{FiedlerN}.
Thus, we take the viewpoint that each type of anyon gives rise to an equivalence class of representations $\pi$ satisfying~\eqref{eq:sselect}.

The split property enters the analysis in various ways.
We first note that the topological phenomena in our systems of interest, in particular the existence of anyons, are believed to be due to the presence of \emph{long-range entanglement}~\cite{ChenGW}.
Product states exhibit no entanglement, and hence should be in the trivial phase without any anyons.
A state with long-range entanglement is then roughly speaking a state that cannot be transformed into a product state by applying a finite sequence of local unitaries throughout the system.
Consider the case where we have a pure state $\omega = \omega_\Lambda \otimes \omega_{\Lambda^c}$ that is a product state with respect to a cone $\Lambda$ and its complement.
It is easy to see (see Section~\ref{sec:lre}) that in this case $\pi_\omega(\caA_\Lambda)''$ is a Type I factor and the inclusion $\pi_\omega(\caA_\Lambda)'' \subset \pi_\omega(\caA_{\Lambda^c})'$ therefore is split.
In Section~\ref{sec:lre} we show that in this case the sector theory is trivial:
any representation $\pi$ satisfying~\eqref{eq:sselect} is a direct sum of copies of the reference representation $\pi_0$.
That is, we only have the trivial charge or anyon.
This corroborates the notion that the sector theory is a good invariant for topological phases by proving that indeed states without long-range entanglement have a trivial sector structure.
Indeed, we will prove that this still is the case for pure states $\omega$ such that $\omega \circ \alpha$ is quasi-equivalent to a product state.
Here, $\alpha$ is a quasi-factorizable automorphism, which can be seen as a generalization of finite-depth quantum circuits to infinite systems.
This result also explains why in models such as the toric code, which \emph{do} have a non-trivial sector theory, we only have a weaker form of the split property, where we have to consider an inclusion $\Lambda_1 \subset \Lambda_2$ of cones whose boundaries are sufficiently far apart~\cite{Naaijkens12}.

This weaker form of the split property also plays a role in the analysis in~\cite{NaaijkensKL}, where the index of a certain subfactor is shown to be related to the total quantum dimension of the sectors.
This result can be used to show that a given list of sectors is complete.
It also is necessary in showing that approximately localized sectors, a generalisation of the notion of a sector discussed above, is stable under applying a path of quasi-local automorphisms~\cite{ChaNN18}.
In either case, the split property for an inclusion $\Lambda_1 \subset \Lambda_2$ allows us to obtain a tensor product decomposition of the ground state Hilbert space such that observables in $\caA_{\Lambda_1}$ and those in $\caA_{\Lambda_2^c}$ act on the distinct factors.
In contrast to finite systems such a decomposition need not exist if the split property fails to hold.
This decomposition can then be used to approximately localize endomorphisms or observables~\cite{ChaNN18}.
This plays a crucial role in the proof of stability of superselection sectors.
Although the proof only requires a variant of the split property to hold at one point along the path of gapped Hamiltonians, it is nevertheless important to understand the stability of the split property itself.

\subsection{Quasi-local maps}\label{qasubsec}
In the classification problem of gapped ground state phases,
we say that two states are in the same phase if they can be realized as ground states of gapped Hamiltonians that can be connected via a continuous (or, for technical reasons, $C^1$) path, in such a way that the energy gap does not close along the path.
Using the spectral flow~\cite{BachmannMNS}, an adaptation of Hastings and Wen's quasi-adiabatic continuation~\cite{HastingsW} to the thermodynamic limit, one obtains a path of automorphisms $s \mapsto \alpha_s $ relating the ground states along the path of gapped Hamiltonians.
Its infinite system version, where a uniform gap for the local Hamiltonians can be replaced by
the spectral gap of the bulk Hamiltonian in the GNS representation was shown in \cite{MoonO19}.
Quasi-local automorphisms are essential transformation
in the theory of gapped ground state phases.

A \emph{quasi-local map} on $\caA_\Gamma$ is a map that maps strictly localized observables to observables that can still be approximately localized in a slightly larger region, with error bounds satisfying a Lieb-Robinson type of estimate.
Our discussion draws heavily on~\cite{NSY}, which in turn incorporates decades of advancements in Lieb-Robinson bounds.

Typically the quasi-local maps are obtained as the dynamics generated by some sufficiently local interaction.
The notion of ``sufficiently local'' is made precise in the following definitions.

\begin{definition}
An $F$-function $F$ on $(\Gamma, d)$
is a non-increasing function $F:[0,\infty)\to (0,\infty)$
such that
\begin{description}
\item[(i)]
$\lV F\rV:=\sup_{x\in\Gamma}\lmk \sum_{y\in\Gamma}F\lmk d(x,y)\rmk\rmk<\infty$,
and
\item[(ii)]
$C_{F}:=\sup_{x,y\in\Gamma}\lmk \sum_{z\in\Gamma}
\frac{F\lmk d(x,z)\rmk F\lmk d(z,y)\rmk}{F\lmk d(x,y)\rmk}\rmk<\infty$.
\end{description}
These are called \emph{uniform integrability} and the \emph{convolution identity}, respectively.
\end{definition}

For an $F$-function $F$ on $(\Gamma, d)$,
define a function $G_{F}$ on $t\ge 0$ by
\begin{align}\label{gfdef}
G_{F}(t):= \sup_{x\in\Gamma}\lmk
\sum_{y\in\Gamma, d(x,y)\ge t} F\lmk d(x,y)\rmk
\rmk,\quad t\ge 0.
\end{align}
Note that by uniform integrability the supremum is finite for all $t$.

Our goal is to interpolate continuously between two local interactions.
Hence we will mainly be considering \emph{paths} of local interactions, in the following sense:
\begin{definition}
A norm-continuous interaction on $\caA_{\Gamma}$ defined on an interval $[0,1]$
is a map
$\Phi:\caP_{0}(\Gamma)\times [0,1]\to \caA_{\Gamma}^{\rm loc}$ such that
\begin{description}
\item[(i)]
for any $t\in[0,1]$, $\Phi(\cdot; t):\caP_{0}(\Gamma)\to \caA_{\Gamma}^{\rm loc}$
is an interaction, and
\item[(ii)]
for any $Z\in\caP_{0}(\Gamma)$, the map
$\Phi(Z;\cdot ):[0,1]\to \caA_{Z}$
is norm-continuous.
\end{description}
\end{definition}

To ensure that the interactions induce quasi-local automorphisms we need to impose sufficient decay properties on the interaction strength.

\begin{definition}
Let $F$ be an $F$-function on $(\Gamma,d)$.
We denote by $\caB_{F}([0,1])$ the set of all
norm continuous interactions on $\caA_{\Gamma}$ defined on an interval $[0,1]$
such that the function $\lV
\Phi
\rV: [0,1]\to \bbR$ defined by
\begin{align}
\lV
\Phi
\rV(t):=
\sup_{x,y\in\Gamma}\frac{1}{F\lmk d(x,y)\rmk}\sum_{Z\in\caP_{0}(\Gamma), Z\ni x,y}
\lV\Phi(Z;t)\rV,\quad t\in[0,1],
\end{align}
is uniformly bounded, i.e., 
$\sup_{t\in[0,1]}\lV \Phi \rV(t)<\infty$.
It follows that $t \mapsto \lV \Phi \rV(t)$ is integrable, and we set 
\begin{align}
I(\Phi):=I_{1,0}(\Phi):= C_{F} \int_{0}^{1} dt\lV \Phi \rV(t) .
\end{align}
\end{definition}
We will need some more notation.
For $\Phi\in \caB_{F}([0,1])$ and $0\le m\in\bbR$, 
we introduce a path of interactions $\Phi_{m}$ by
\begin{align}\label{pm}
\Phi_{m}\lmk X;t\rmk:=|X|^{m}\Phi\lmk X;t\rmk,\quad X\in\pg,\quad t\in[0,1].
\end{align}
Next we recall that an interaction gives rise to local (and here, time-dependent) Hamiltonians, via
\begin{align}
H_{\Lambda,\Phi}(t):=\sum_{Z\subset\Lambda}\Phi(Z;t),\quad t\in[0,1].
\end{align}
We denote by $U_{\Lambda,\Phi}(t;s)$, the solution of 
\begin{align}
\frac{d}{dt} U_{\Lambda,\Phi}(t;s)=-iH_{\Lambda,\Phi}(t) U_{\Lambda,\Phi}(t;s),\quad t\in[0,1]\\
U_{\Lambda,\Phi}(s;s)=\unit.
\end{align}
We define corresponding automorphisms $\tau_{t,s}^{(\Lambda),\Phi}, \hat{\tau}_{t,s}^{(\Lambda), \Phi}$ on $\caA_{\Gamma}$ by
\begin{align}
\tau_{t,s}^{(\Lambda), \Phi}(A):=U_{\Lambda,\Phi}(t;s)^{*}AU_{\Lambda,\Phi}(t;s),\\
\hat{\tau}_{t,s}^{(\Lambda), \Phi}(A):=U_{\Lambda,\Phi}(t;s)AU_{\Lambda,\Phi}(t;s)^{*},
\end{align}
with $A \in \caA_\Gamma$. 
Note that
\begin{align}\label{inv}
\hat{\tau}_{t,s}^{(\Lambda), \Phi}={\tau}_{s,t}^{(\Lambda), \Phi},
\end{align}
by the uniqueness of the solution of the differential equation.
Using standard techniques one can prove locality estimates for time-evolved local observables in the form of Lieb-Robinson bounds, which in turn can be used to show that the local dynamics $\tau_{t,s}^{(\Lambda), \Phi}$ induce global dynamics.
Since we will make use of these facts repeatedly we recall the main points here.
\begin{theorem}[\cite{NSY}]\label{tni}
Let $(\Gamma, d)$ be a countable metric space, and let $F$ be a $F$-function on $(\Gamma, d)$.
Suppose that $\Phi\in\caB_F([0,1])$.
The following holds:
\begin{enumerate}[label={\rm\textbf{(\roman*)}}]
\item \label{it:tdlimit} The limit
\begin{align}
\tau_{t,s}^{\Phi}(A):=\lim_{\Lambda \nearrow\Gamma}\tau_{t,s}^{(\Lambda), \Phi}(A),\quad
A\in\caA_{\Gamma}, \quad t,s\in[0,1]
\end{align}
exists and defines a strongly continuous family of automorphisms on $\caA_{\Gamma}$
such that
\begin{align}
\tau_{t,s}^{\Phi}\circ\tau_{s,u}^{\Phi}=\tau_{t,u}^{\Phi},\quad \tau_{t,t}^{\Phi}=\id_{\caA_{\Gamma}}, \quad t,s,u\in[0,1].
\end{align}
\item \label{it:lr}For any $X,Y\in \caP_{0}(\Gamma)$ with $X\cap Y=\emptyset$, and $A\in \caA_{X}$, $B\in\caA_{Y}$
we have
\begin{align}
\lV
\left[
\tau_{t,s}^{\Phi}(A), B
\right]
\rV
\le \frac{2\lV A\rV\lV B\rV}{C_{F}}\lmk e^{2I(\Phi)}-1\rmk\lv X\rv G_{F}\lmk d(X,Y)\rmk.
\end{align}
If $\Lambda \in \caP_{0}(\Gamma)$ and $X \cup Y \subset \Lambda$, a similar bound holds for $\tau_{t,s}^{(\Lambda),\Phi}$.

\item \label{it:approx}
For any $X\in \caP_{0}(\Gamma)$ 
we have
\begin{align}
&\lV
\Delta_{X(m)}\lmk
\tau_{t,s}^{\Phi}(A)\rmk
\rV
\le \frac{4\lV A\rV}{C_{F}}\lmk e^{2I(\Phi)}-1\rmk\lv X\rv G_{F}\lmk m\rmk,
\end{align}
for all $\Lambda\in\caP_{0}(\Gamma)$
and $A\in \caA_{X}$.
A similar bound holds for $\tau_{t,s}^{(\Lambda),\Phi}$.

\item \label{it:localdyn}
For any $X,\Lambda\in \pg$ with $X\subset\Lambda$, and $A \in \caA_X$
we have
\begin{align}
\lV
\tau_{t,s}^{(\Lambda), \Phi}(A)-\tau_{t,s}^{\Phi}(A)
\rV
\le\frac{2}{C_{F}} \lV A\rV e^{2I(\Phi)}I(\Phi) \lv X\rv G_{F}\lmk d\lmk X,\Gamma\setminus\Lambda\rmk
\rmk.
\end{align}
\end{enumerate}
\end{theorem}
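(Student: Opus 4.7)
The four statements are linked: (ii) is the technical core, and (iii), (iv), (i) each follow from it in turn. I would first establish (ii) at finite volume for $\tau^{(\Lambda),\Phi}_{t,s}$ by the classical Lieb--Robinson argument. Fix $A\in\caA_X$ and $B\in\caA_Y$, and consider $f(t):=[\tau^{(\Lambda),\Phi}_{t,s}(A),B]$. Differentiating and using that $[\Phi(Z;t),A]=0$ whenever $Z\cap X=\emptyset$ yields a Volterra-type integro-differential inequality for $\lV f(t)\rV$ whose driving term involves only those $\Phi(Z;t)$ with $Z\cap X\neq\emptyset$. Iterating and applying the convolution identity $C_F$ of the $F$-function at each nesting level collapses the resulting multiple sums and produces the series $\sum_{n\ge 1}(2I(\Phi))^n/n!$, which gives the exponential factor $e^{2I(\Phi)}-1$. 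The outermost sum over $Z\ni x\in X$ against $F$ contributes the prefactor $|X|$, while the spatial decay $G_F\lmk d(X,Y)\rmk$ survives because every chain of $Z$'s connecting $X$ to $Y$ must bridge at least that distance.

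From (ii) the remaining items follow. For (iii), I would use the Haar-average representation $\Pi_Y(B)=\int_{\caU(\caA_{Y^c})}VBV^{*}\,dV$, which yields the pointwise bound $\lV B-\Pi_Y(B)\rV\le\sup_{V\in\caU(\caA_{Y^c})}\lV [V,B]\rV$. Writing $\Delta_{X(m)}=(\id-\Pi_{X(m-1)})-(\id-\Pi_{X(m)})$ and taking $Y=X(m-1)$ bounds $\lV\Delta_{X(m)}\lmk\tau^{\Phi}_{t,s}(A)\rmk\rV$ by $2\sup_{V}\lV [V,\tau^{\Phi}_{t,s}(A)]\rV$ over unitaries $V$ supported at distance $\ge m$ from $X$; inserting (ii) with $\lV V\rV=1$ and using monotonicity of $G_F$ produces the claimed bound (the factor $4$ is $2\times 2$, one from the telescope and one from the Lieb--Robinson prefactor). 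For (iv) I would use a Duhamel identity comparing $\tau^{(\Lambda),\Phi}$ with $\tau^{(\Lambda'),\Phi}$ for $\Lambda\subset\Lambda'$: setting $F(u):=\tau^{(\Lambda),\Phi}_{t,u}\circ\tau^{(\Lambda'),\Phi}_{u,s}(A)$ and integrating $F'(u)$ from $s$ to $t$ writes the difference as $i\int_s^t du\,\tau^{(\Lambda),\Phi}_{t,u}\lmk [H_{\Lambda'}(u)-H_\Lambda(u),\,\tau^{(\Lambda'),\Phi}_{u,s}(A)]\rmk$. For each $Z\subset\Lambda'$ with $Z\cap(\Gamma\setminus\Lambda)\neq\emptyset$, the commutator $[\Phi(Z;u),\tau^{(\Lambda'),\Phi}_{u,s}(A)]$ is controlled by (ii), and the sum over such $Z$ is folded back into $\lV\Phi\rV(u)$ at the cost of a single factor $G_F\lmk d(X,\Gamma\setminus\Lambda)\rmk$; integrating in $u$ produces the prefactor $I(\Phi)$ and the stated bound. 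Finally, (i) follows from (iv): the net $\{\tau^{(\Lambda),\Phi}_{t,s}(A)\}_\Lambda$ is norm-Cauchy as $\Lambda\nearrow\Gamma$ for each local $A$, so the limit exists on $\caA_\Gamma^{\mathrm{loc}}$ and extends by density and the uniform $\ast$-morphism structure of each $\tau^{(\Lambda),\Phi}_{t,s}$ to an automorphism of $\caA_\Gamma$; the cocycle law and strong continuity in $(t,s)$ pass to the limit.

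The main obstacle is the iteration in (ii): the exponential prefactor, the $|X|$, and the decay $G_F\lmk d(X,Y)\rmk$ all emerge simultaneously from the same inductive collapse of nested sums via $C_F$, and this is the step where both defining properties of an $F$-function — uniform integrability and the convolution identity — are used in an essential way. Everything else is essentially downstream: (iii) is a Haar-average trick fed into (ii), (iv) is a one-line Duhamel interpolation fed into (ii), and (i) is a Cauchy argument fed into (iv).
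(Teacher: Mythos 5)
Your outline is correct, but note that the paper does not actually prove Theorem~\ref{tni}: it is given as a citation, with item (i) taken from Theorem~3.5 of \cite{NSY}, items (ii) and (iv) obtained from Corollary~3.6 and eq.~(3.80) there, and item (iii) from (ii) combined with Corollary~4.4 (cf.\ the proof of Lemma~5.1) of \cite{NSY}. What you wrote is essentially a compressed reconstruction of the proofs behind those citations: the iterated Volterra/convolution argument with $C_F$ for (ii), the conditional-expectation-as-Haar-average commutator bound for (iii), the Duhamel interpolation for (iv), and the Cauchy-net argument for (i) are exactly the standard mechanisms. One structural point you implicitly get right and should keep explicit in a full write-up: as stated, (iv) compares $\tau^{(\Lambda),\Phi}$ with the infinite-volume $\tau^{\Phi}$, which does not exist until (i) is proven, so the non-circular order is to run the Duhamel comparison between two finite volumes $\Lambda\subset\Lambda'$ (as you do), deduce the Cauchy property and hence (i), and only then let $\Lambda'\nearrow\Gamma$ to obtain (iv) in its stated form. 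The remaining differences are bookkeeping that a complete proof must track but your sketch only gestures at: the bound (ii) is stated for finite $Y$ and must be extended to unitaries supported in the infinite complement $X(m-1)^{c}$ (by approximating with local unitaries), and on a general metric space points outside $X(m-1)$ are only at distance $>m-1$ from $X$, so the telescoping naively yields $G_F(m-1)+G_F(m)$ rather than the stated $4\,G_F(m)$-type constant; this kind of constant-tracking is precisely what the citation to \cite{NSY} outsources. The trade-off is clear: the paper's route keeps the argument short and delegates the delicate estimates to a reference where they are carried out with care, while your route would make the statement self-contained at the cost of reproducing several pages of \cite{NSY}.
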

\begin{proof}
	Item~\ref{it:tdlimit} is Theorem~3.5 of~\cite{NSY}, while~\ref{it:lr} and~\ref{it:localdyn} follow from Corollary~3.6 of the same paper by a straightforward bounding of $D(X,Y)$ and the summation in eq.~(3.80) of~\cite{NSY} respectively.
	Finally,~\ref{it:approx} can be obtained using~\ref{it:lr} and~\cite[Cor. 4.4]{NSY} (see also the proof of Lemma~5.1 in the same paper).
\end{proof}

Consider the same notation and assumptions as in Theorem~\ref{tni}.
To continue we need to make additional assumptions on the function $F$.
In particular, we assume that there is an $\alpha\in(0,1)$ such that
\begin{align}\label{as:galp}
	\sum_{n=0}^{\infty} (1+n)^{2\nu+1}G_F(n)^{\alpha}<\infty,
\end{align}
where $G_F$ is as defined in~\eqref{gfdef}.
Furthermore, we assume that there is an $F$-function $\tilde F$ on $(\Gamma, d)$ such that
\begin{align}\label{as:gf}
\max\left\{
F\lmk \frac r 3\rmk, \sum_{n=[\frac r 3]}^{\infty} (1+n)^{2\nu+1}G_F(n)^{\alpha}
\right\}\le \tilde F(r).
\end{align}
With these additional assumptions we can distill the following result.
It gives us a way to apply a quasi-local automorphism to a given dynamics.
The result will generally \emph{not} be an interaction, since the interaction terms will not localized in finite regions any more.
Nevertheless, the theorem shows that we can define a proper interaction that gives the correct local Hamiltonians.
\begin{theorem}\label{tsan}
Let $(\Gamma, d)$ be a countable $\nu$-regular metric space and $F$ be an $F$-function on $(\Gamma, d)$ such that there are $\alpha$ and $\tilde{F}$ satisfying~\eqref{as:galp} and~\eqref{as:gf}.
Let $\Phi\in\caB_{F}([0,1])$ be a path of interactions such that
$\Phi_{1}\in \caB_{F}([0,1])$, where $\Phi_1$ is defined in~\eqref{pm}.
Finally, choose an increasing sequence $\Lambda_n \in \caP_0(\Gamma)$ such that $\Lambda_n \nearrow \Gamma$, and let $\tau_{t,s}^\Phi$ and $\tau_{t,s}^{(\Lambda_n),\Phi}$ be as in Theorem~\ref{tni}.

Then, with $s \in [0,1]$,
the right hand side of the following sum
\begin{align}\label{eq:psis}
\Psi^{(s)}\lmk
Z, t
\rmk:=
\sum_{m\ge 0} \sum_{X\subset Z,\; X(m)=Z}
\Delta_{X(m)}\lmk
\tau_{t,s}^\Phi\lmk
 \Phi\lmk X; t\rmk
\rmk
\rmk
\end{align}
defines an interaction $\Psi^{(s)}\in\caB_{\tilde F}([0,1])$.
 Furthermore, the formula 
 \begin{align}\label{eq:psisn}
	 \Psi^{(n)(s)}\lmk
Z, t
\rmk:=
\sum_{m\ge 0} \sum_{X\subset Z, X(m)\cap\Lambda_{n}=Z}
\Delta_{X(m)}\lmk
\tau_{t,s}^{(\Lambda_n, \Phi)}\lmk
 \Phi\lmk X; t\rmk
\rmk
\rmk
\end{align}
defines $\Psi^{{(n)(s)}}\in \caB_{\tilde F}([0,1])$ such that
$\Psi^{(n)}\lmk
Z, t
\rmk=0$ unless $Z\subset \Lambda_{n}$,
and satisfies
\begin{align}\label{psio}
\tau_{t,s}^{(\Lambda_n), \Phi} \lmk H_{\Lambda_n, \Phi}(t)\rmk
=H_{\Lambda_n, \Psi^{(n)}}(t).
\end{align}
For any $t,u\in[0,1]$, we have
\begin{align}\label{convconv}
\lim_{n\to\infty}\lV
\tau_{t,u}^{\Psi^{(n)(s)}}\lmk A\rmk
-\tau_{t,u}^{\Psi^{(s)}}\lmk A\rmk
\rV=0,\quad A\in\caA_{\Gamma}.
\end{align}
\end{theorem}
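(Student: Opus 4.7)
The plan is to prove the four claims — $\Psi^{(s)} \in \caB_{\tilde F}([0,1])$, $\Psi^{(n)(s)} \in \caB_{\tilde F}([0,1])$ with $\Psi^{(n)(s)}(Z, t) = 0$ unless $Z \subset \Lambda_n$, the telescoping identity~\eqref{psio}, and the convergence~\eqref{convconv} — with the first being the main technical step.

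For $\Psi^{(s)} \in \caB_{\tilde F}$, the starting point is the approximation bound Theorem~\ref{tni}\ref{it:approx}: each term $\lV \Delta_{X(m)}(\tau^\Phi_{t,s}(\Phi(X;t))) \rV$ is bounded by $\frac{4(e^{2I(\Phi)}-1)}{C_F}\lV \Phi(X;t)\rV |X| G_F(m)$. Reindexing $\sum_{Z \ni x, y} \lV \Psi^{(s)}(Z;t)\rV$ as a sum over $(X, m)$ with $x, y \in X(m)$, and writing $r = d(x, y)$, the triangle inequality $r \le d(x, X) + \diam(X) + d(y, X)$ yields a dichotomy: either (Case~1) $m \ge r/3$, or (Case~2) $m < r/3$, in which case $\diam(X) > r/3$ and both $d(x,X), d(y,X) < r/3$. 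In Case~1, I would bound $\sum_{X: x \in X(m)} |X|\lV\Phi(X;t)\rV \le \kappa (1+m)^\nu \lV \Phi_1\rV(t) F(0)$ using $\Phi_1 \in \caB_F$ and $\nu$-regularity, then invoke $G_F(m) \le \lV F\rV^{1-\alpha} G_F(m)^\alpha$ and the tail bound $\sum_{m \ge r/3}(1+m)^{2\nu+1} G_F(m)^\alpha \le \tilde F(r)$ from~\eqref{as:gf}. In Case~2, each admissible $X$ contains witnesses $z_x \in b_x(m)$, $z_y \in b_y(m)$ with $d(z_x, z_y) \ge r/3$, so $\sum_{X \supset \{z_x, z_y\}} |X|\lV\Phi(X;t)\rV \le \lV \Phi_1\rV(t) F(r/3)$; summing $z_x, z_y$ over their balls costs $\kappa^2(1+m)^{2\nu}$, and summing $m < r/3$ gives a finite constant, yielding a bound of $C F(r/3) \le C \tilde F(r)$. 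Norm continuity in $t$ follows because $\Psi^{(s)}(Z; t)$ is a finite sum (only finitely many $(X, m)$ contribute for a given finite $Z$) of terms each continuous in $t$.

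The argument for $\Psi^{(n)(s)}$ is identical with $\tau^{(\Lambda_n), \Phi}_{t,s}$ in place of $\tau^\Phi_{t, s}$ (the Lieb-Robinson estimates of Theorem~\ref{tni} apply equally to the local dynamics), and the constraint $X(m) \cap \Lambda_n = Z$ forces $Z \subset \Lambda_n$. For identity~\eqref{psio}, for $X \subset \Lambda_n$ the sets $X(m) \cap \Lambda_n$ stabilize at $\Lambda_n$ as $m \to \infty$, so $\sum_{m \ge 0}\Delta_{X(m) \cap \Lambda_n}(A) = \Pi_{\Lambda_n}(A) = A$ for $A \in \caA_{\Lambda_n}$; applied to $A = \tau^{(\Lambda_n), \Phi}_{t,s}(\Phi(X;t))$ and summed over $X \subset \Lambda_n$, this gives $\sum_{Z \subset \Lambda_n}\Psi^{(n)(s)}(Z;t) = \tau^{(\Lambda_n), \Phi}_{t,s}(H_{\Lambda_n, \Phi}(t))$ as claimed.

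For~\eqref{convconv}, I would first show $\Psi^{(n)(s)}(Z; t) \to \Psi^{(s)}(Z; t)$ in norm for each fixed $Z$ and $t$: only finitely many $(X, m)$ contribute, and once $\Lambda_n$ is large enough to contain $X(m)$ the intersection $\cap\, \Lambda_n$ drops out, while Theorem~\ref{tni}\ref{it:localdyn} gives $\tau^{(\Lambda_n), \Phi}_{t,s}(\Phi(X;t)) \to \tau^\Phi_{t,s}(\Phi(X;t))$ in norm. Combined with the uniform $\caB_{\tilde F}$-bound on $\Psi^{(n)(s)}$ and a standard dominated-convergence style continuity result for dynamics generated by quasi-local interactions (see~\cite{NSY}), this yields the strong convergence $\tau^{\Psi^{(n)(s)}}_{t,u}(A) \to \tau^{\Psi^{(s)}}_{t,u}(A)$ for every $A \in \caA_\Gamma$.

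The main obstacle is Step~1: the two-regime split at the scale $r/3$ is chosen precisely to match the definition~\eqref{as:gf} of $\tilde F$ — Case~2 produces the $F(r/3)$ contribution, Case~1 the weighted tail $\sum_{n \ge r/3}(1+n)^{2\nu+1}G_F(n)^\alpha$. Recognizing this matching, and using $\Phi_1 \in \caB_F$ (rather than just $\Phi \in \caB_F$) to absorb the $|X|$ factor from the Lieb-Robinson bound, are the key technical insights; the rest is bookkeeping using $\nu$-regularity and the Hölder-type inequality $G_F(m) \le \lV F\rV^{1-\alpha} G_F(m)^\alpha$.
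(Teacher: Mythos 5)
Your proposal is correct in substance, but it takes a more self-contained route than the paper. The paper's own proof is essentially by citation: well-definedness, self-adjointness and $\Psi^{(s)}(Z;t)\in\caA_Z$ are noted directly, membership of $\Psi^{(s)}$ and $\Psi^{(n)(s)}$ in $\caB_{\tilde F}([0,1])$ is obtained from Theorem 5.17 of \cite{NSY} (with $p=0$, $q=r=1$, the hypotheses being checked via Theorem~\ref{tni}), \eqref{psio} is an explicit computation, and \eqref{convconv} follows by combining Theorem 5.17(ii) (convergence of $\Psi^{(n)(s)}$ to $\Psi^{(s)}$ in $F$-norm with respect to $\tilde F$), Theorem 5.13 (a uniform bound on $\int_0^1\lV\Psi^{(n)(s)}\rV_{\tilde F}(t)\,dt$) and Theorem 3.8 of \cite{NSY}. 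Your Step 1 instead reproves by hand the estimate that underlies NSY's Theorem 5.17(i): the reindexing over pairs $(X,m)$, the dichotomy at scale $d(x,y)/3$, absorbing the factor $\lv X\rv$ from Theorem~\ref{tni}~(iii) into $\lV\Phi_1\rV$, and matching the two regimes with the two entries of the maximum in \eqref{as:gf} is exactly the right mechanism, and your bounds are uniform in $t$ (and, for the finite-volume version, in $n$). Your telescoping argument for \eqref{psio} is also correct, modulo the notational point that the definition uses $\Delta_{X(m)}$ rather than $\Delta_{X(m)\cap\Lambda_n}$; these agree on elements of $\caA_{\Lambda_n}$, which is all you need. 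What your route buys is transparency (it explains why \eqref{as:gf} has precisely that form); what the paper's route buys is brevity, and it obtains the exact convergence statement needed in the last step directly from NSY.

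The one thin spot is your argument for \eqref{convconv}: termwise convergence $\Psi^{(n)(s)}(Z;t)\to\Psi^{(s)}(Z;t)$ plus a uniform $\caB_{\tilde F}$ bound is not literally the hypothesis of the convergence theorem for dynamics (NSY Theorem 3.8), which requires local convergence in $F$-norm, i.e.\ $\int_0^1\sum_{Z\cap\Lambda\neq\emptyset}\lV\Psi^{(n)(s)}(Z;t)-\Psi^{(s)}(Z;t)\rV\,dt\to 0$ for each finite $\Lambda$. Passing from pointwise convergence to this needs equi-summability in $n$ of the tails (in $m$ and in the effective size of $Z$); this is in fact available from your own Step 1 estimates, since they are uniform in $n$ and the large-$m$ contributions are controlled by $\sum_{m\ge M}(1+m)^{2\nu+1}G_F(m)^{\alpha}$, but you should either spell this out or cite NSY Theorem 5.17(ii) as the paper does. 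Two further small points: for $Z=\Lambda_n$ the $m$-sum in \eqref{eq:psisn} is genuinely infinite (all sufficiently large $m$ satisfy $X(m)\cap\Lambda_n=\Lambda_n$), so norm continuity in $t$ there relies on uniform convergence of that series rather than finiteness; and you should record explicitly that each $\Psi^{(s)}(Z;t)$ is self-adjoint and lies in $\caA_Z$, which is what makes it an interaction.
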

\begin{proof}
If $Z$ is a finite set, we see that the right-hand side of~\eqref{eq:psis} contains only finitely many terms and hence is well-defined.
Moreover, because of the $\Delta_{X(m)}$, it follows that $\Psi^{(s)}(Z,t) \in \caA_Z$.
Since $\tau_{t,s}$ is in automorphism we see that $\Psi^{(s)}(Z,t)$ is self-adjoint, and hence defines an interaction.
That this interaction is in $\caB_{\tilde{F}}([0,1])$ follows then from Theorem 5.17(i) of~\cite{NSY}.
The conditions of this theorem can be verified using Theorem~\ref{tni}, where in the notation of~\cite{NSY} we have $p=0$ and $q=r=1$.

Similarly, equation~\eqref{eq:psisn} defines an interaction, and~\eqref{psio} can be verified by an explicit calculation, if we note that $\tau_{t,s}^{(\Lambda_n), \Phi}(\Phi(X;t))$ is in $\caA_{\Lambda_n}$.
By part (ii) of Theorem~5.17 of~\cite{NSY} it follows that $\Psi^{(n)(s)} \in \caB_{\tilde{F}}([0,1])$, and moreover that $\Psi^{(n)(s)}$ converges to $\Psi^{(s)}$ in $F$-norm with respect to $\tilde{F}$.
Theorem 5.13 of~\cite{NSY} implies
\begin{align}
\sup_{n}\int_{0}^{1}\lV \Psi^{(n)(s)}\rV_{\tilde F}(t) dt<\infty,
\end{align}
see also~\cite[eq. (5.101)]{NSY}.
Therefore, from Theorem 3.8 of \cite{NSY}, we obtain~\eqref{convconv}.
\end{proof}

\section{Factorization of quasi-local automorphisms}\label{sec:stable}
In this section we give our main technical result.
In particular, we study conditions under which a quasi-local automorphism $\tau_{1,0}^\Phi$ ``factorizes'' as in Proposition~\ref{prop:splitstable}, in particular equation~\eqref{eq:factorize}.
In the next theorem we give a sufficient condition in terms of the regions involved and the $F$-function for $\Phi$.

Before we state the full conditions and prove the result, let us briefly outline the main steps.
The idea behind the proof is to compare the full dynamics generated by the interaction $\Phi$ with the ``decoupled'' dynamics $\Phi^{(0)}$.
The latter simply omits all interaction terms of $\Phi$ crossing the boundary of $\Gamma_2 \setminus \Gamma_1$.
The first step is to show that the difference between the dynamics, $\tau^{\Phi}_{1,0} \circ \left(\tau^{\Phi^{(0)}}_{1,0}\right)^{-1}$ is quasi-local, and generated by an interaction as in Theorem~\ref{tsan}.
In the second step we show that this interaction can be well approximated by interaction terms localized in $\Gamma_2' \setminus \Gamma_1'$, with $\Gamma_1' \subset \Gamma_1 \subset \Gamma_2 \subset \Gamma_2'$, in the sense that the contributions \emph{outside} this region sum up to a bounded operator in $\calA_\Gamma$.
In Step 3 this is then used to show that the difference of the full and decoupled dynamics can be written as an automorphism of $\calA_{\Gamma_2'\setminus \Gamma_1'}$ followed by conjugation with a unitary.
This ultimately allows us to write the interaction in form that allows us to apply Proposition~\ref{prop:splitstable}, and provide natural examples of quasi-factorizable automorphisms.

\begin{theorem}
	\label{thm:quasiauto}
Let $(\Gamma, d)$ be a countable $\nu$-regular metric space with constant $\kappa$ as in~\eqref{nreg}.
Let $F$ be an $F$-function on $(\Gamma, d)$ such that the function $G_F$ defined by (\ref{gfdef}) satisfies (\ref{as:galp}) 
for some $\alpha\in(0,1)$.
Suppose that there is an $F$-function $\tilde F$ satisfying (\ref{as:gf}) for this $F$.
Let $\caA_\Gamma$ be a quantum spin system given by (\ref{ag}) and (\ref{agg}).

Let $\Phi\in \caB_{F}([0,1])$ be a path of interactions satisfying $\Phi_1\in \caB_F([0,1])$.
(Recall from definition (\ref{pm}) that this means that $X \mapsto |X| \Phi(X;t)$ is in $\caB_F([0,1])$).
Let
\begin{align}
\Gamma_1'\subset  \Gamma_{1}\subset \Gamma_{2}\subset \Gamma_2'\subset \Gamma.
 \end{align}
For $m\in\nan\cup \{0\}$, $x,y\in\Gamma$, set
\begin{align}
	\label{eq:defnf}
f(m,x,y):=\sum_{X\ni x,y, d\lmk \lmk \Gamma_{2}'\setminus \Gamma_{1}'\rmk^{c}, X\rmk\le m} |X| \sup_{t\in[0,1]} \lV\Phi(X,t)\rV.
\end{align}
We assume that
 \begin{align}\label{anan}
\lmk  \sum_{x\in \Gamma_1} \sum_{y\in \Gamma_2^{c}}+  \sum_{x\in \Gamma_2\setminus \Gamma_1} 
 \sum_{y\in \lmk \Gamma_2\setminus \Gamma_1\rmk^c}\rmk
 \sum_{m=0}^\infty
  G_F(m)f(m,x,y)<\infty
 \end{align}
 Define $\Phi^{(0)}\in \caB_{F}([0,1])$ by
 \begin{align}
 \Phi^{(0)}\lmk X; t\rmk:=
 \left\{
\begin{gathered}
 \Phi\lmk X; t\rmk,\quad \text{if}\; X\subset \Gamma_{1}\; \text{or}\; X\subset \Gamma_{2}\setminus\Gamma_{1}\;  \text{or} \; X\subset \Gamma_{2}^{c}\\
 0,\quad \text{otherwise}
 \end{gathered}
 \right.,
 \end{align}
 for each $X\in\caP_{0}(\Gamma)$, $t\in[0,1]$.
 Then there is an automorphism
$\beta_{\Gamma_2'\setminus \Gamma_1'}$ on $\caA_{\Gamma_2'\setminus \Gamma_1'}$
and a unitary $u\in\caA_{\Gamma}$ such that
\begin{align}\label{eq:quasifactor}
\tau_{1,0}^\Phi=\Ad(u)\circ
\tau_{1,0}^{\Phi^{(0)}}\circ\lmk\tilde  \beta_{\Gamma_2'\setminus \Gamma_1'}\rmk.
\end{align}
\end{theorem}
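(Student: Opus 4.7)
The plan is to compare the full dynamics $\tau^{\Phi}_{t,0}$ with the decoupled dynamics $\tau^{\Phi^{(0)}}_{t,0}$ by analyzing the correction automorphism $\gamma_t := \tau^{\Phi}_{t,0} \circ \bigl(\tau^{\Phi^{(0)}}_{t,0}\bigr)^{-1}$, showing that $\gamma_1$ factors as an automorphism localized in the buffer $\Gamma_2'\setminus\Gamma_1'$ together with conjugation by a unitary in $\caA_\Gamma$. I follow the three-step strategy sketched in the text.

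\textbf{Step 1.} First I show that $\gamma_t$ is a quasi-local dynamics generated by an interaction. The finite-volume cocycle $V_\Lambda(t) = U_{\Phi^{(0)},\Lambda}(t;0)^* U_{\Phi,\Lambda}(t;0)$ obeys $\dot V_\Lambda = -iK_\Lambda(t) V_\Lambda$ with $K_\Lambda(t) = \tau^{(\Lambda),\Phi^{(0)}}_{t,0}\bigl(\sum_{X\subset\Lambda,\,X\text{ crossing}}\Phi(X;t)\bigr)$, where \emph{crossing} means $X$ is not contained in any of $\Gamma_{1}$, $\Gamma_{2}\setminus\Gamma_{1}$, or $\Gamma_{2}^{c}$. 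Mimicking the construction of $\Psi^{(s)}$ in Theorem~\ref{tsan}, but with the dynamics $\tau^{\Phi^{(0)}}$ applied to the boundary interaction $\Phi-\Phi^{(0)}$, I define
\[
\tilde\Psi(Z;t) := \sum_{m\ge 0}\ \sum_{\substack{X\subset Z,\,X(m)=Z\\ X\text{ crossing}}} \Delta_{X(m)}\bigl(\tau^{\Phi^{(0)}}_{t,0}(\Phi(X;t))\bigr),
\]
and verify that $\tilde\Psi\in\caB_{\tilde F}([0,1])$ and $\gamma_t = \tau^{\tilde\Psi}_{t,0}$, using the hypothesis $\Phi_{1}\in\caB_{F}$ and the function $\tilde F$ from~(\ref{as:gf}), along the lines of the proof of Theorem~\ref{tsan}.

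\textbf{Step 2.} Next, split $\tilde\Psi = \tilde\Psi^{in} + \tilde\Psi^{out}$ according to whether $Z\subset\Gamma_2'\setminus\Gamma_1'$ or not, and show that $\int_{0}^{1}\sum_{Z}\lV\tilde\Psi^{out}(Z;t)\rV\,dt<\infty$. By Theorem~\ref{tni}\ref{it:approx} each summand in the definition of $\tilde\Psi$ is bounded by $C\,|X|\,G_F(m)\,\lV\Phi(X;t)\rV$, and when $Z\not\subset\Gamma_2'\setminus\Gamma_1'$ necessarily $m\ge d(X,(\Gamma_2'\setminus\Gamma_1')^{c})$. Every crossing $X$ contains at least one pair $(x,y)$ with either $x\in\Gamma_{1},\,y\in\Gamma_{2}^{c}$ or $x\in\Gamma_{2}\setminus\Gamma_{1},\,y\in(\Gamma_{2}\setminus\Gamma_{1})^{c}$, so the estimate can be reorganized as a sum over such pairs, then over $X\ni x,y$, and finally over $m\ge d(X,(\Gamma_{2}'\setminus\Gamma_{1}')^{c})$; this matches the quantity in~(\ref{anan}), which is finite by assumption.

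\textbf{Step 3 and rearrangement.} Let $\beta_t := \tau^{\tilde\Psi^{in}}_{t,0}$. Since $\tilde\Psi^{in}$ is supported in $\Gamma_2'\setminus\Gamma_1'$, $\beta_t$ acts trivially on $\caA_{(\Gamma_2'\setminus\Gamma_1')^{c}}$, so $\beta_t = \beta^{\circ}_{t}\otimes\id$ for an automorphism $\beta^{\circ}_{t}$ of $\caA_{\Gamma_2'\setminus\Gamma_1'}$. The bounded, integrable sum $K_{out}(t):=\sum_{Z}\tilde\Psi^{out}(Z;t)\in\caA_\Gamma$ from Step~2, combined with a standard interaction-picture cocycle argument, produces a unitary $w(t)\in\caA_\Gamma$ such that $\gamma_{t}\circ\beta_{t}^{-1} = \Ad(w(t)^{*})$, whence $\tau^{\Phi}_{1,0} = \Ad(w(1)^{*})\circ\beta_1\circ\tau^{\Phi^{(0)}}_{1,0}$. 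Using the identity $\psi\circ\Ad(v)=\Ad(\psi(v))\circ\psi$ together with the factorization $\tau^{\Phi^{(0)}}=\tau_{\Gamma_{1}}\otimes\tau_{\Gamma_{2}\setminus\Gamma_{1}}\otimes\tau_{\Gamma_{2}^{c}}$, I commute $\beta_1$ past $\tau^{\Phi^{(0)}}_{1,0}$ and absorb the resulting dressing into a new unitary $u\in\caA_\Gamma$, reaching the form $\tau^{\Phi}_{1,0} = \Ad(u)\circ\tau^{\Phi^{(0)}}_{1,0}\circ\tilde\beta_{\Gamma_{2}'\setminus\Gamma_{1}'}$. The main obstacle is Step 2: matching the Lieb-Robinson tail bound to the precise form of~(\ref{anan}) requires careful bookkeeping, and the factor $|X|$ inside $f(m,x,y)$ is precisely what forces the hypothesis $\Phi_{1}\in\caB_{F}$, while the strict inclusion $\Gamma_{2}\setminus\Gamma_{1}\subsetneq\Gamma_{2}'\setminus\Gamma_{1}'$ supplies the positive minimum tail distance that makes the $G_{F}$-tails summable.
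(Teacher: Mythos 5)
Your Steps 1 and 2 are essentially the paper's argument (the paper conjugates the crossing terms $\Phi^{(1)}=\Phi^{(0)}-\Phi$ with the \emph{full} dynamics $\tau^{\Phi}$ and identifies $\hat\tau^{\Psi^{(s)}}_{t,s}=\tau^{\Phi}_{t,s}\circ(\tau^{\Phi^{(0)}}_{t,s})^{-1}$, while you conjugate with the decoupled dynamics; both give a correct cocycle identity, and both need the ``mixed'' version of Theorem~\ref{tsan}, i.e.\ the NSY transformed-interaction result, rather than the theorem as literally stated). The tail estimate in your Step 2 also matches the paper's: crossing sets contain a pair $(x,y)$ as in~\eqref{anan}, and $Z=X(m)\not\subset\Gamma_2'\setminus\Gamma_1'$ forces $d(X,(\Gamma_2'\setminus\Gamma_1')^c)\le m$, so the sum is controlled by $\sum G_F(m)f(m,x,y)$.

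The genuine gap is in your final rearrangement. Your construction yields $\tau^{\Phi}_{1,0}=\Ad(w(1)^*)\circ\beta_1\circ\tau^{\Phi^{(0)}}_{1,0}$ with the buffer automorphism on the \emph{left} of $\tau^{\Phi^{(0)}}_{1,0}$, whereas \eqref{eq:quasifactor} requires it on the right. The commutation you invoke does not work: the identity $\psi\circ\Ad(v)=\Ad(\psi(v))\circ\psi$ only moves \emph{inner} automorphisms, and $\beta_1$ is not inner. Writing $\beta_1\circ\tau^{\Phi^{(0)}}_{1,0}=\tau^{\Phi^{(0)}}_{1,0}\circ\tilde\beta$ forces $\tilde\beta=(\tau^{\Phi^{(0)}}_{1,0})^{-1}\circ\beta_1\circ\tau^{\Phi^{(0)}}_{1,0}$, and this is \emph{not} localized in $\Gamma_2'\setminus\Gamma_1'$: the support of $\beta_1$ straddles all three tensor factors of $\tau^{\Phi^{(0)}}=\tau_{\Gamma_1}\otimes\tau_{\Gamma_2\setminus\Gamma_1}\otimes\tau_{\Gamma_2^c}$, and e.g.\ for $A\in\caA_{\Gamma_1'}$ one has $\tau^{\Phi^{(0)}}_{1,0}(A)\in\caA_{\Gamma_1}$ but in general not in $\caA_{\Gamma_1'}$, so $\beta_1$ does not fix it and $\tilde\beta(A)\neq A$; the conjugated automorphism spreads over all of $\Gamma_1$ and $\Gamma_2^c$. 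The paper avoids this by anchoring the interaction-picture comparison at $s=1$: it proves $\tau^{\Psi^{(s)}}_{s,t}=\tau^{\Phi}_{t,s}\circ(\tau^{\Phi^{(0)}}_{t,s})^{-1}$ and then \emph{takes inverses}, which flips the composition order and gives $\tau^{\Phi}_{1,0}=\tau^{\Phi^{(0)}}_{1,0}\circ\tau^{\Psi^{(1)}}_{0,1}$ with the correction already on the right; after splitting $\tau^{\Psi^{(1)}}_{0,1}=\Ad(W^{(1)}(0)^*)\circ\tau^{\tilde\Psi}_{0,1}$ only the unitary has to be moved through $\tau^{\Phi^{(0)}}_{1,0}$, which the inner-automorphism identity does handle. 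Your argument can be repaired the same way (run your Steps 1--3 for the pair anchored at time $1$, i.e.\ compare $\tau^{\Phi}_{t,1}$ with $\tau^{\Phi^{(0)}}_{t,1}$, and then invert), but as written the last step does not establish \eqref{eq:quasifactor}.
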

\begin{proof}
{\it Step 1.}
First we would like to represent $\tau_{1,0}^{ \Phi}\circ \lmk \tau_{1,0}^{\Phi^{(0)}}\rmk^{-1}$
as some quasi-local automorphism, applying Theorem \ref{tsan}.
 Let $\{\Lambda_{n}\}_{n=1}^{\infty}\subset\caP_{0}\lmk \Gamma\rmk$ be an increasing sequence $\Lambda_{n}\nearrow\Gamma$.
 We also define $\Phi^{(1)}\in \caB_{F}([0,1])$ by
 \begin{align}
 \Phi^{(1)}\lmk X; t\rmk:=\Phi^{(0)}\lmk X; t\rmk-\Phi\lmk X; t\rmk,
 \end{align}
 for each $X\in\caP_{0}(\Gamma)$, $t\in[0,1]$.
 
Let $t,s\in[0,1]$.
We apply Theorem~\ref{tsan} to $\Phi^{(1)}$.
Hence we set
\begin{align}
\Psi^{(s)}\lmk
Z, t
\rmk:=
\sum_{m\ge 0} \sum_{X\subset Z,\; X(m)=Z}
\Delta_{X(m)}\lmk
\tau_{t,s}^{\Phi}\lmk
 \Phi^{(1)}\lmk X; t\rmk
\rmk
\rmk
\end{align}
and
\begin{align}
\Psi^{(n)(s)}\lmk
Z, t
\rmk:=
\sum_{m\ge 0} \sum_{X\subset Z, X(m)\cap\Lambda_{n}=Z}
\Delta_{X(m)}\lmk
\tau_{t,s}^{(\Lambda_n)\Phi}\lmk
 \Phi^{{(1)}}\lmk X; t\rmk
\rmk
\rmk.
\end{align}
Corresponding to (\ref{psio}), we obtain
\begin{align}
\tau_{t,s}^{(\Lambda_n),\Phi} \lmk
H_{\Lambda_n,\Phi^{(1)}}
\rmk
= H_{\Lambda_n,\Psi^{(n)(s)}}(t).
\end{align}
Applying Theorem~\ref{tsan},
we have $\Psi^{(n)(s)}, \Psi^{(s)}\in \caB_{\tilde F}([0,1])$, and
\begin{align}\label{convconv1}
\lim_{n\to\infty}\lV
\tau_{t,u}^{\Psi^{(n)(s)}}\lmk A\rmk
-\tau_{t,u}^{\Psi^{(s)}}\lmk A\rmk
\rV=0,\quad A\in\caA_{\Gamma},\quad t,u\in [0,1]
\end{align}
holds.
Note that
\begin{align}
\begin{split}
\frac{d}{dt} \hat\tau_{t,s}^{(\Lambda_n), \Psi^{(n)(s)}}(A)
&=-i\left[ H_{\Lambda_n, \Psi^{(n)(s)}}(t), \hat\tau_{t,s}^{(\Lambda_n), \Psi^{(n)(s)}}(A)
\right] \\
&=-i \left[
\tau_{t,s}^{(\Lambda_n),\Phi} \lmk
H_{\Lambda_n, \Phi^{(1)}}
\rmk,
\hat\tau_{t,s}^{(\Lambda_n), \Psi^{(n)(s)}}(A)
\right].
\end{split}
\end{align}
On the other hand, we have
\begin{align}
\begin{split}
\frac{d}{dt} \tau_{t,s}^{(\Lambda_n), \Phi}&\circ \lmk \tau_{t,s}^{(\Lambda_n),\Phi^{(0)}}\rmk^{-1}(A) \\
&=\tau_{t,s}^{(\Lambda_n),\Phi}
\lmk
i\left[
H_{\Lambda_n,\Phi}(t)-H_{\Lambda_n,\Phi^{(0)}}(t), \lmk \tau_{t,s}^{(\Lambda_n),\Phi^{(0)}}\rmk^{-1}(A)
\right]
\rmk \\
&=-i\left[
\tau_{t,s}^{(\Lambda_n),\Phi} \lmk
H_{\Lambda_n, \Phi^{(1)}}
\rmk,
\tau_{t,s}^{(\Lambda_n), \Phi}\circ \lmk \tau_{t,s}^{(\Lambda_n),\Phi^{(0)}}\rmk^{-1}(A)
\right].
\end{split}
\end{align}
Hence $\hat\tau_{t,s}^{(\Lambda_n), \Psi^{(n)(s)}}(A)$ and $ \tau_{t,s}^{(\Lambda_n), \Phi}\circ \lmk \tau_{t,s}^{(\Lambda_n),\Phi^{(0)}}\rmk^{-1}(A)$
satisfy the same differential equation with the 
$\hat\tau_{s,s}^{(\Lambda_n), \Psi^{(n)(s)}}(A)=\tau_{s,s}^{(\Lambda_n), \Phi}\circ \lmk \tau_{s,s}^{(\Lambda_n),\Phi^{(0)}}\rmk^{-1}(A)=A$.
Therefore we obtain
\begin{align}\label{ata}
\hat\tau_{t,s}^{(\Lambda_n), \Psi^{(n)(s)}}(A)= \tau_{t,s}^{(\Lambda_n), \Phi}\circ \lmk \tau_{t,s}^{(\Lambda_n),\Phi^{(0)}}\rmk^{-1}(A),\quad t\in [0,1],\quad A\in\caA_{\Gamma}.
\end{align}
From the fact that $\hat\tau_{t,u}^{\Psi^{(n)(s)}}\lmk A\rmk=\hat \tau_{t,u}^{(\Lambda_n), \Psi^{(n)(s)}}=\tau_{u,t}^{(\Lambda_n), \Psi^{(n)(s)}}=\tau_{u,t}^{\Psi^{(n)(s)}}$ converges strongly to an automorphism $\tau_{u,t}^{\Psi^{(s)}}$ on 
$\caA_{\Gamma}$ (\ref{convconv1}), 
we have
\begin{align}\label{convconvh}
\lim_{n\to\infty}\lV
\hat\tau_{t,s}^{\Psi^{(n)(s)}}\lmk A\rmk
-\tau_{s,t}^{\Psi^{(s)}}\lmk A\rmk
\rV=0,\quad A\in\caA_{\Gamma}.
\end{align}
On the other hand, by Theorem~\ref{tni},  
we have for $t \in [0,1]$ and $A \in \caA_\Gamma$
\begin{align}
\lim_{n\to\infty}\lV
 \tau_{t,s}^{(\Lambda_n), \Phi}\circ \lmk \tau_{t,s}^{(\Lambda_n),\Phi^{(0)}}\rmk^{-1}(A)
 -\tau_{t,s}^{ \Phi}\circ \lmk \tau_{t,s}^{\Phi^{(0)}}\rmk^{-1}(A)
 \rV=0.
\end{align}
Therefore, taking $n\to\infty$ limit in (\ref{ata}), we obtain
\begin{align}
\tau_{s,t}^{ \Psi^{(s)}}(A)= \tau_{t,s}^{\Phi}\circ \lmk \tau_{t,s}^{\Phi^{(0)}}\rmk^{-1}(A),\quad t,s\in [0,1],\quad A\in\caA_{\Gamma}.
\end{align}
Hence we have
\begin{align}
\tau_{s,t}^{\Phi}=\lmk \tau_{t,s}^{\Phi}\rmk^{-1}
=\lmk \tau_{t,s}^{\Phi^{(0)}}\rmk^{-1}\lmk\tau_{s,t}^{ \Psi^{(s)}}\rmk^{-1}=
\tau_{s,t}^{\Phi^{(0)}}\tau_{t,s}^{ \Psi^{(s)}}
\end{align}
In particular, we get
\begin{align}\label{ttt}
\tau_{1,0}^{\Phi}=
\tau_{1,0}^{\Phi^{(0)}}\tau_{0,1}^{ \Psi^{(1)}}.
\end{align}
{\it Step 2.}
We show that the summation
\begin{align}\label{vt}
V(t):=\sum_{Z\in\pg}
\lmk
\id-
\Pi_{\Gamma_{2}'\setminus \Gamma_{1}'}
\rmk
\lmk
\Psi^{(1)}\lmk Z,t\rmk
\rmk\in\caA_\Gamma
\end{align}
converges absolutely in the norm topology, and uniformly in $t\in[0,1]$.
Set
\begin{align}
V_{n}(t):=\sum_{Z\in\pg,\; Z\subset \Lambda_{n} } 
\lmk
\id-
\Pi_{\Gamma_{2}'\setminus \Gamma_{1}'}
\rmk
\lmk
\Psi^{(1)}\lmk Z,t\rmk
\rmk\in\caA_{\Lambda_{n}},\quad n\in\nan.
\end{align}
From the convergence of (\ref{vt}) uniform in $t$,
we get
\begin{align}\label{kub}
\lim_{n\to\infty}\sup_{t\in[0,1]}\lV V_{n}(t)-V(t)\rV=0.
\end{align}
To prove the convergence of (\ref{vt}), it suffices to prove
\begin{align}\label{smsm}
\lim_{n\to\infty}\sup_{t\in[0,1]}\sum_{Z\in\pg,\; Z\cap \Lambda_{n}^{c}\neq \emptyset } 
\lV
\lmk
\id-
\Pi_{\Gamma_{2}'\setminus \Gamma_{1}'}
\rmk\lmk
\Psi^{(1)}\lmk z,t\rmk
\rmk
\rV
=0.
\end{align}
To prove this, we introduce the following functions. For $m\in\nan\cup \{0\}$, $n\in\nan$,  and $x,y\in\Gamma$, set
\begin{align}
f_{n}(m,x,y):=\sum_{X\ni x,y,  d(X, \Lambda_{n}^{c})\le m\; d\lmk \lmk \Gamma_{2}'\setminus \Gamma_{1}'\rmk^{c}, X\rmk\le m} |X| \sup_{t\in[0,1]} \lV\Phi(X,t)\rV.
\end{align}
Note that $f_n(m,x,y)$ is bounded by $f$ point-wise (by definition) and converges to zero point-wise, by~\eqref{anan}.
Hence by~\eqref{anan} and Lebesgue's dominated convergence theorem, we obtain
\begin{align}\label{bnbn}
\lim_{n\to\infty}
\lmk
\lmk  \sum_{x\in \Gamma_1} \sum_{y\in \Gamma_2^{c}}+  \sum_{x\in \Gamma_2\setminus \Gamma_1} 
 \sum_{y\in \lmk \Gamma_2\setminus \Gamma_1\rmk^c}\rmk
 \sum_{m=0}^\infty
  G_F(m)f_{n}(m,x,y)\rmk=0.
 \end{align}
We have
\begin{align}
&\sup_{t\in[0,1]}\sum_{Z\in\pg,\; Z\cap \Lambda_{n}^{c}\neq \emptyset } 
\lV
\lmk
\id-
\Pi_{\Gamma_{2}'\setminus \Gamma_{1}'}
\rmk
\lmk
\Psi^{(1)}\lmk Z,t\rmk
\rmk\rV
\\
\begin{split}
&\le
\sum_{Z\in\pg,\; Z\cap \Lambda_{n}^{c}\neq \emptyset } 
\sum_{m\ge 0} \sum_{X\subset Z,\; X(m)=Z} \\
&\quad\quad\quad
\left[
\sup_{t\in[0,1]}\lV
\lmk
\id-\Pi_{\Gamma_{2}'\setminus \Gamma_1'}
\rmk
\Delta_{X(m)}\lmk
\tau_{t,1}^{\Phi}\lmk
 \Phi^{(1)}\lmk X; t\rmk
\rmk
\rmk
\rV
\right]
\end{split}\\
&\le
\sum_{m\ge 0} \sum_{X \in \caP_0(\Gamma) \; X(m)\cap \Lambda_{n}^{c}\neq \emptyset }
\sup_{t\in[0,1]}\lV\lmk
\id-
\Pi_{\Gamma_{2}'\setminus \Gamma_{1}'}
\rmk
\Delta_{X(m)}\lmk
\tau_{t,1}^{\Phi}\lmk
 \Phi^{(1)}\lmk X; t\rmk
\rmk
\rmk
\rV\\
&\le2\sum_{m\ge 0} \sum_{X\in\pg\; X(m)\cap \Lambda_{n}^{c}\neq \emptyset , X(m)\cap\lmk \Gamma_{2}'\setminus \Gamma_{1}'\rmk^{c}\neq\emptyset}
\sup_{t\in[0,1]}\lV\Delta_{X(m)}\lmk
\tau_{t,1}^{\Phi}\lmk
 \Phi^{(1)}\lmk X; t\rmk
\rmk
\rmk
\!\rV\\
\begin{split}
& \le2\sum_{m\ge 0} \sum_{X\in\pg\; X(m)\cap \Lambda_{n}^{c}\neq \emptyset , X(m)\cap\lmk \Gamma_{2}'\setminus \Gamma_{1}'\rmk^{c}\neq\emptyset}\\
&\quad\quad\quad\quad\quad\quad\quad
\left[
\sup_{t\in[0,1]}
\frac{4\lV \Phi^{(1)}\lmk X; t\rmk \rV}{C_{F}}\lmk e^{2I(\Phi)}-1\rmk\lv X\rv G_{F}\lmk m\rmk
\right]
\end{split}
\\
\begin{split}
&=\frac{8}{C_{F}}\lmk e^{2I(\Phi)}-1\rmk
\sum_{m\ge 0} \sum_{X\in\pg\; X(m)\cap \Lambda_{n}^{c}\neq \emptyset , X(m)\cap\lmk \Gamma_{2}'\setminus \Gamma_{1}'\rmk^{c}\neq\emptyset }
\\
&\quad\quad\quad\quad\quad\quad\quad
\left[
\sup_{t\in[0,1]}\lmk \lV \Phi^{(1)}\lmk X; t\rmk \rV\rmk
\lv X\rv G_{F}\lmk m\rmk
\right]
\end{split}
\label{remi}
\end{align}
For the fourth inequality, we used Theorem~\ref{tni}~(iii).
From the definition of $ \Phi^{(1)}$, we have
$
 \Phi^{(1)}\lmk X; t\rmk =0, 
$
unless
$X$ has a non-empty intersection with at least two of
$\Gamma_{1}$, $\Gamma_{2}^c$, $\Gamma_{2}\setminus \Gamma_{1}$.
In particular, we have
$
 \Phi^{(1)}\lmk X; t\rmk =0, 
$
unless
$X\cap \Gamma_{1}\neq\emptyset, X\cap \Gamma_{2}^{c}\neq\emptyset$ or
$X\cap\lmk  \Gamma_{2}\setminus \Gamma_{1}\rmk\neq\emptyset,
X\cap\lmk  \Gamma_{2}\setminus \Gamma_{1}\rmk^{c}\neq\emptyset$.
Therefore, if $
 \Phi^{(1)}\lmk X; t\rmk \neq 0, 
$ there should be $x\in\Gamma_{1}$,
$y\in\Gamma_{2}^{c}$ with $X\ni x, y$
or 
$x\in\Gamma_{2}\setminus \Gamma_{1}$
$y\in\lmk  \Gamma_{2}\setminus \Gamma_{1}\rmk^{c}$
with $X\ni x, y$.
We also note that if
$X(m)\cap \Lambda_{n}^{c}\neq \emptyset$ and $X(m)\cap\lmk \Gamma_{2}'\setminus \Gamma_{1}'\rmk^{c}$,
then we have $d(X, \Lambda_{n}^{c})\le m$ and
$d(X,\lmk \Gamma_{2}'\setminus \Gamma_{1}'\rmk^{c})\le m$.
Therefore we have
\begin{align}
&(\ref{remi})
\le
\frac{8}{C_{F}}\lmk e^{2I(\Phi)}-1\rmk
\lmk
\sum_{x\in\Gamma_{1}}\sum_{{y\in\Gamma_{2}^{c}}}
+\sum_{x\in\Gamma_{2}\setminus \Gamma_{1}}\sum_{y\in\lmk  \Gamma_{2}\setminus \Gamma_{1}\rmk^{c}}
\rmk\\
&\sum_{m\ge 0} \sum_{X\in\pg\; d(X, \Lambda_{n}^{c})\le m,\;
d(X,\lmk \Gamma_{2}'\setminus \Gamma_{1}'\rmk^{c})\le m,\; X\ni x,y}
\sup_{t\in[0,1]}\lmk \lV \Phi^{(1)}\lmk X; t\rmk \rV\rmk
\lv X\rv G_{F}\lmk m\rmk\\
&=\frac{8}{C_{F}}\lmk e^{2I(\Phi)}-1\rmk
\lmk
\sum_{x\in\Gamma_{1}}\sum_{{y\in\Gamma_{2}^{c}}}
+\sum_{x\in\Gamma_{2}\setminus \Gamma_{1}}\sum_{y\in\lmk  \Gamma_{2}\setminus \Gamma_{1}\rmk^{c}}
\rmk
\sum_{m\ge 0} f_{n}(m,x,y)G_{F}\lmk m\rmk.
\end{align}
The last part converges to $0$ as $n\to\infty$ because of (\ref{bnbn}).
This proves (\ref{smsm}), and hence that~\eqref{vt} converges.
\\
{\it Step 3.}
Next we decompose $\Psi^{(1)}$ into a $\Gamma_{2}'\setminus \Gamma_{1}'$-part
\begin{align}
\tilde\Psi(Z,t):=\Pi_{\Gamma_{2}'\setminus \Gamma_{1}'}
\lmk
\Psi^{(1)}(Z,t)
\rmk
\end{align}
 and the rest.
 Clearly, we have $\tilde \Psi\in \caB_{\tilde F}([0,1])$.
 Note that
 \begin{align}\label{hvh}
 H_{\Lambda_{n}, \tilde\Psi}(t)+V_{n}(t)
 = H_{\Lambda_{n}, \Psi^{(1)}}(t).
 \end{align}
 
 As a uniform limit of $[0,1]\ni t\mapsto V_{n}(t)\in \caA_{\Gamma}$,
$[0,1]\ni t\mapsto V(t)\in \caA_{\Gamma}$ is norm-continuous.
Because of $\tilde \Psi\in \caB_{\tilde F}([0,1])$,
$[0,1]\ni t\mapsto \tau_{t,s}^{\tilde\Psi}\lmk V(t)\rmk\in \caA_{\Gamma}$ is also norm-continuous,
for each $s\in[0,1]$.
Therefore, for each $s\in [0,1]$, there is a unique norm-differentiable map from
$[0,1]$ to $ \caU\lmk \caA_{\Gamma}\rmk$
such that
\begin{align}
\frac{d}{dt} W^{(s)}(t)=-i \tau_{t,s}^{\tilde\Psi}\lmk V(t)\rmk W^{(s)}(t),\quad
W^{(s)}(s)=\unit.
\end{align}
The solution is given by
\begin{align}\label{wsexp}
W^{(s)}(t)
:=\sum_{k=0}^{\infty }(-i)^{k}
\int_{s}^{t}ds_{1}\int_{s}^{s_{1}}ds_{2}\cdots \int_{s}^{s_{k-1}}ds_{k}
\tau_{s_{1},s}^{\tilde\Psi}\lmk V(s_{1})\rmk
\cdots \tau_{s_{k},s}^{\tilde\Psi}\lmk V(s_{k})\rmk.
\end{align}
Analogously, for each $s\in[0,1]$ and $n\in\nan$, we define
a unique norm-differentiable map from
$[0,1]$ to $ \caU\lmk \caA_{\Gamma}\rmk$
such that
\begin{align}
\frac{d}{dt} W_{n}^{(s)}(t)=-i \tau_{t,s}^{(\Lambda_{n})\tilde\Psi}\lmk V_{n}(t)\rmk W_{n}^{(s)}(t),\quad
W_{n}^{(s)}(s)=\unit.
\end{align}
This differential equation can be solved similarly as in equation~\eqref{wsexp}.
By the uniform convergence (\ref{kub}),
we then have
\begin{align}
\lim_{n}\sup_{t\in[0,1]}\lV W_{n}^{(s)}(t)- W^{(s)}(t) \rV =0.
\end{align}
From this and Theorem \ref{tni} (iv) for $\Psi^{(1)}, \tilde \Psi\in \caB_{\tilde F}([0,1])$,
we have
\begin{align}\label{limlim}
\lim_{n\to\infty}  \tau_{s,t}^{(\Lambda_{n}), \tilde\Psi}\circ \Ad \lmk W_{n}^{(s)}(t)\rmk (A)=
\tau_{s,t}^{ \tilde\Psi}\circ \Ad \lmk W^{(s)}(t)\rmk (A),\\
\lim_{n\to\infty}  \tau_{s,t}^{(\Lambda_{n}), \Psi^{(1)}}(A)=
\tau_{s,t}^{ \Psi^{(1)}}(A),
\end{align}
for any $A\in\caA_{\Gamma}$.

Note that for any $A\in\caA_{\Gamma}$
\begin{align}
&\frac{d}{dt} \tau_{s,t}^{(\Lambda_{n}), \tilde\Psi}\circ \Ad \lmk W_{n}^{(s)}(t)\rmk (A)\\
\begin{split}
&=-i\left[
H_{\Lambda_{n}, \tilde\Psi}(t),
\tau_{s,t}^{(\Lambda_{n}), \tilde\Psi}\circ \Ad \lmk W_{n}^{(s)}(t)\rmk (A)
\right] \\
&\quad\quad\quad\quad\quad\quad\quad
-i\tau_{s,t}^{(\Lambda_{n}), \tilde\Psi}\lmk
\left[
\tau_{t,s}^{(\Lambda_{n}), \tilde\Psi}\lmk V_{n}(t)\rmk,
\Ad \lmk W_{n}^{(s)}(t)\rmk (A)
\right]
\rmk
\end{split}\\
&=-i\left[
H_{\Lambda_{n}, \tilde\Psi}(t)+V_{n}(t),
\tau_{s,t}^{(\Lambda_{n}), \tilde\Psi}\circ \Ad \lmk W_{n}^{(s)}(t)\rmk (A)
\right]\nonumber\\
&=-i\left[
H_{\Lambda_{n}, \Psi^{(1)}}(t),
\tau_{s,t}^{(\Lambda_{n}), \tilde\Psi}\circ \Ad \lmk W_{n}^{(s)}(t)\rmk (A)
\right].
\end{align}
We used  (\ref{inv})
for the second equality and (\ref{hvh}) for the third equality.
On the other hand, for any $A\in\caA_{\Gamma}$, we have 
\begin{align}
\frac{d}{dt} \tau_{s,t}^{(\Lambda_{n}), \Psi^{(1)}} (A)=-i\left[
H_{\Lambda_{n}, \Psi^{(1)}}(t),
\tau_{s,t}^{(\Lambda_{n}), \Psi^{(1)}} (A)
\right].
\end{align}
Therefore, $\tau_{s,t}^{(\Lambda_{n}), \tilde\Psi}\circ \Ad \lmk W_{n}^{(s)}(t)\rmk (A)$
and $ \tau_{s,t}^{(\Lambda_{n}), \Psi^{(1)}} (A)$ satisfy the same differential equation.
Also note that we have
\[
    \tau_{s,s}^{(\Lambda_{n}), \tilde\Psi}\circ \Ad \lmk W_{n}^{(s)}(s)\rmk (A)=
 \tau_{s,s}^{(\Lambda_{n}), \Psi^{(1)}} (A)=A.
\]
 Therefore, we get
 \begin{align}
 \tau_{s,t}^{(\Lambda_{n}), \tilde\Psi}\circ \Ad \lmk W_{n}^{(s)}(t)\rmk (A)
 =\tau_{s,t}^{(\Lambda_{n}), \Psi^{(1)}} (A).
 \end{align}
By (\ref{limlim}), we obtain
\begin{align}
 \tau_{s,t}^{\tilde\Psi}\circ \Ad \lmk W^{(s)}(t)\rmk (A)
 =\tau_{s,t}^{\Psi^{(1)}} (A),\quad A\in\caA_{\Gamma}, \; t,s\in[0,1].
 \end{align}
 Taking inverse, we get 
\begin{align}\label{www}
 \Ad \lmk W^{(s)^{*}}(t)\rmk\circ\tau_{t,s }^{\tilde\Psi}  
 =\tau_{t,s}^{\Psi^{(1)}},\; t,s\in[0,1].
 \end{align}
 {\it Step 4.}
 Combining (\ref{ttt}) and (\ref{www}) we have
 \begin{align}
\tau_{1,0}^{\Phi}=
\tau_{1,0}^{\Phi^{(0)}}\tau_{0,1}^{ \Psi^{(1)}}
=\tau_{1,0}^{\Phi^{(0)}}\circ\Ad \lmk \lmk W^{(1)}(0)\rmk^{*}\rmk\circ\tau_{0,1 }^{\tilde\Psi}.
\end{align}
Setting
\begin{align}
\beta_{\Gamma_2'\setminus \Gamma_1'}:=\tau_{0,1 }^{\tilde\Psi},\quad
u:=\tau_{1,0}^{\Phi^{(0)}}\lmk \lmk W^{(1)}(0)\rmk^{*}\rmk
\end{align}
completes the proof.
\end{proof}

\section{Long-range entanglement}\label{sec:lre}
An interesting problem is to find conditions that lead to a trivial superselection structure.
Topological order is associated to ``long-range entanglement'' that cannot be removed by local operations.
This should be contrasted with product states, which are not entangled at all.
Hence one is interested in states that cannot be transformed into product states by such local operations.
The product states are said to be in the topologically trivial phase~\cite{ChenGW}.

The goal of this section is to show that such a topologically trivial state indeed leads to a trivial superselection structure, at least when we restrict to strictly localized sectors as in equation~\eqref{eq:sselect}.
To make this precise, we recall that the equivalence relation defined in terms of finite depth quantum circuits is somewhat too restrictive in the thermodynamic limit, and one has to look at limits of such automorphisms as well.
In addition, we will only require to be able to ``decouple'' a cone-like region.
Because of transportability of the anyons that is assumed, the choice of cone is not important.
We therefore adopt the following definition.

\begin{definition}\label{def:lre}
Let $\calA_\Gamma$ be the quasi-local algebra of a quantum spin system with $\Gamma = \mathbb{Z}^\nu$. We say that a pure state $\omega$ has \emph{long-range entanglement} (LRE) if there is no quasi-factorizable automorphism $\alpha \in \operatorname{Aut}(\calA_\Gamma)$ such that $\omega \circ \alpha$ is a product state with respect to some cone $\Lambda$. 
Here we say that a state is a product state for a cone $\Lambda$ if it is of the form $\omega = \omega_\Lambda \otimes \omega_{\Lambda^c}$, with $\omega_\Lambda$ a state on $\alg{A}_\Lambda$, and similarly for $\omega_{\Lambda^c}$. 
\end{definition}

\begin{remark}
Since the idea is to capture the trivial phase, the set of allowed automorphisms is dictated by the equivalence relation one puts on the ground states.
Our proofs depend on $\alpha$ being quasi-factorizable, which is why we choose this class of automorphisms in our definition of long-range entanglement.
As we show in Section~\ref{sec:approxsplit}, the notion of quasi-factorizable automorphisms includes natural examples of gapped paths of uniformly bounded finite range interactions.
As we show below, any state that is not long-range entangled has a trivial sector structure.
In fact, the sector structure for states in other phases is also preserved under applying quasi-factorizable automorphisms, if one makes the additional assumption of \emph{approximate Haag duality}~\cite{Ogata21a}.
\end{remark}

The condition that $\Gamma = \mathbb{Z}^\nu$ is not essential.
However, in the general case one should define the appropriate analogue of a cone.
This depends on the localization properties of the excitations one would want to consider, but for the definition to be non-trivial a cone should at least have infinitely many sites.

Note that for a state to be \emph{not} long-range entangled, we only require the condition to hold for a \emph{single} cone $\Lambda$.That is, a state is not long-range entangled if we can disentangle the cone $\Lambda$ from its complement.
Typically the states we are interested in have a large degree of `homogeneity', for example because they will be translation invariant.
Moreover, we will be interested in transportable charges, in the sense that we can move a charge localized in a specific cone to \emph{any} with a unitary operator.
Thus typically one expects that if it is possible to decouple a single cone in this situation, one can do it for more cones.
Since we will not actually need that, we restrict to this simpler definition.

In the following we first consider the situation where the pure reference state $\omega_0$ is a product state with respect to a fixed cone $\Lambda$, i.e., $\omega_0 = \omega_\Lambda \otimes \omega_{\Lambda^c}$ for some states $\omega_\Lambda$ and $\omega_{\Lambda^c}$ on $\calA_\Lambda$ and $\calA_{\Lambda^c}$ respectively.
Below we consider general pure states without long-range entanglement.

We first recall the following Lemma (compare with e.g.~\cite{Matsui01,Matsui10}).
\begin{lemma}\label{lem:split}
    Let $\varphi$ be a pure state on $\calA_\Gamma$ and suppose that there is a cone $\Lambda$ such that $\varphi$ is quasi-equivalent to $\varphi_\Lambda \otimes \varphi_{\Lambda^c}$, where $\varphi_\Lambda := \varphi|\calA_\Lambda$. Then $\calR_\Lambda := \pi_\varphi(\alg{A}_\Lambda)''$ is a factor of Type I, and so is $\mathcal{R}_{\Lambda^c}$. Moreover, Haag duality holds: $\calR_\Lambda = \calR_{\Lambda^c}'$.
\end{lemma}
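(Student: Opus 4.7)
The plan is to reduce Haag duality to a tensor-product calculation in the GNS representation of the product state, then transport the conclusion back via the quasi-equivalence. Since $\varphi$ is pure, $\pi_\varphi$ is irreducible and $\pi_\varphi(\calA_\Gamma)''=B(\calH_\varphi)$. I would take the GNS representation of $\varphi_\Lambda\otimes\varphi_{\Lambda^c}$ to be $\pi_{\varphi_\Lambda}\otimes\pi_{\varphi_{\Lambda^c}}$ on $\calH_{\varphi_\Lambda}\otimes\calH_{\varphi_{\Lambda^c}}$, whose image is the von Neumann tensor product $\mathcal{M}_\Lambda\otimes\mathcal{M}_{\Lambda^c}$, where $\mathcal{M}_\Lambda:=\pi_{\varphi_\Lambda}(\calA_\Lambda)''$ and $\mathcal{M}_{\Lambda^c}:=\pi_{\varphi_{\Lambda^c}}(\calA_{\Lambda^c})''$. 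The assumed quasi-equivalence then supplies a normal $*$-isomorphism $\Phi:B(\calH_\varphi)\to\mathcal{M}_\Lambda\otimes\mathcal{M}_{\Lambda^c}$ intertwining the two representations on $\calA_\Gamma$; by normality, $\Phi(\calR_\Lambda)=\mathcal{M}_\Lambda\otimes\mathbb{C}$ and $\Phi(\calR_{\Lambda^c})=\mathbb{C}\otimes\mathcal{M}_{\Lambda^c}$.

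For the Type~I claim I would argue that $\mathcal{M}_\Lambda\otimes\mathcal{M}_{\Lambda^c}$, being isomorphic to the Type~I factor $B(\calH_\varphi)$, is itself a Type~I factor. Since the center of a von Neumann tensor product is the tensor product of the centers, and a tensor product of factors is of Type~I precisely when both tensor factors are Type~I, both $\mathcal{M}_\Lambda$ and $\mathcal{M}_{\Lambda^c}$ must themselves be Type~I factors. Transporting back through $\Phi$, so are $\calR_\Lambda$ and $\calR_{\Lambda^c}$.

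For Haag duality I would use that a $*$-isomorphism of von Neumann algebras preserves relative commutants, so that $\Phi\bigl(\calR_\Lambda'\cap B(\calH_\varphi)\bigr)=(\mathcal{M}_\Lambda\otimes\mathbb{C})'\cap(\mathcal{M}_\Lambda\otimes\mathcal{M}_{\Lambda^c})$. A standard tensor-product computation (via the Tomita commutation theorem combined with a slice-map argument) identifies this relative commutant with $Z(\mathcal{M}_\Lambda)\otimes\mathcal{M}_{\Lambda^c}=\mathbb{C}\otimes\mathcal{M}_{\Lambda^c}=\Phi(\calR_{\Lambda^c})$, where the middle equality uses that $\mathcal{M}_\Lambda$ is a factor. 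Inverting $\Phi$ yields $\calR_\Lambda'=\calR_{\Lambda^c}$. The main care point will be to keep the two notions of ``commutant'' apart: $\calR_\Lambda'$ lives inside $B(\calH_\varphi)$, whereas the tensor-product calculation is a relative commutant inside $\mathcal{M}_\Lambda\otimes\mathcal{M}_{\Lambda^c}$; these are matched by $\Phi$, but neither should be conflated with the commutant inside $B(\calH_{\varphi_\Lambda}\otimes\calH_{\varphi_{\Lambda^c}})$, which will generally be strictly larger because $\varphi_\Lambda$ and $\varphi_{\Lambda^c}$ need not individually be pure.
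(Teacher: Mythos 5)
Your proposal is correct, but it follows a genuinely different route from the paper's. You lean entirely on the normal isomorphism $\Phi$ coming from the quasi-equivalence: since $\Phi$ intertwines $\pi_\varphi$ with $\pi_{\varphi_\Lambda}\otimes\pi_{\varphi_{\Lambda^c}}$ and is a $\sigma$-weak homeomorphism, it identifies $\calR_\Lambda$ with $\mathcal{M}_\Lambda\,\overline{\otimes}\,\mathbb{C}$ and $\calR_{\Lambda^c}$ with $\mathbb{C}\,\overline{\otimes}\,\mathcal{M}_{\Lambda^c}$, so factoriality and the Type~I property come in one stroke from $Z(\mathcal{M}_\Lambda\,\overline{\otimes}\,\mathcal{M}_{\Lambda^c})=Z(\mathcal{M}_\Lambda)\,\overline{\otimes}\,Z(\mathcal{M}_{\Lambda^c})$ and the fact that a tensor product of factors is Type~I iff both factors are; Haag duality then reduces to the relative-commutant formula $(\mathcal{M}_\Lambda\otimes\mathbb{C})'\cap(\mathcal{M}_\Lambda\,\overline{\otimes}\,\mathcal{M}_{\Lambda^c})=Z(\mathcal{M}_\Lambda)\,\overline{\otimes}\,\mathcal{M}_{\Lambda^c}$, which indeed follows from the Tomita commutation theorem (slice maps are not even needed: take commutants of $(\mathcal{M}_\Lambda\otimes\mathbb{C})\vee(\mathcal{M}_\Lambda'\,\overline{\otimes}\,\mathcal{M}_{\Lambda^c}')$). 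The paper instead proves factoriality of $\calR_\Lambda$ \emph{before} touching the quasi-equivalence, directly from purity and locality ($\calR_\Lambda\vee\calR_{\Lambda^c}=\caB(\calH_\varphi)$ forces trivial center); it then transports the Type~I property of $\pi_{\varphi_\Lambda}(\calA_\Lambda)''$ to $\calR_\Lambda$ by noting that $\pi_{\varphi_\Lambda}$ is a subrepresentation of $\pi_\varphi|\calA_\Lambda$ and that subrepresentations of a factor representation are quasi-equivalent to it; and it gets Haag duality spatially, writing $\calH_\varphi\cong\calH_1\otimes\calH_2$ with $U\calR_\Lambda U^*=\caB(\calH_1)\otimes I$ via the Type~I structure theorem and using $\calR_\Lambda\vee\calR_{\Lambda^c}=\caB(\calH_\varphi)$ to force $U\calR_{\Lambda^c}U^*=I\otimes\caB(\calH_2)$. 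Your version buys a more uniform argument (everything is read off from $\Phi$, and duality needs only factoriality of $\mathcal{M}_\Lambda$, not Type~I), at the cost of invoking the tensor-product commutation/relative-commutant machinery and of making explicit that the GNS representation of the product state is $\pi_{\varphi_\Lambda}\otimes\pi_{\varphi_{\Lambda^c}}$; the paper's version is more elementary and spatial. Your closing caution about which ambient algebra the commutants are taken in is exactly the right point to watch, and you handle it correctly.
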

\begin{proof}
Write $(\pi_\varphi, \calH_\varphi, \Omega_\varphi)$ for the GNS representation of $\varphi$.
Because $\varphi$ is pure, $\pi_\varphi(\alg{A}_\Gamma)''$ is a Type I factor. 
Note that $\calR_\Lambda \vee \calR_{\Lambda^c} = \alg{B}(\calH_{\varphi})$.
Here $\calR_\Lambda \vee \calR_{\Lambda^c}$ is the smallest von Neumann algebra containing both $\calR_\Lambda$ and $\calR_{\Lambda^c}$.
Taking the commutant of this equation, and noting that by locality we have that $\calR_\Lambda \subset \calR_{\Lambda^c}'$, one obtains
\[
	\mathcal{R}_\Lambda' \cap \mathcal{R}_\Lambda \subset \mathcal{R}_\Lambda' \cap \mathcal{R}_{\Lambda^c}' = \mathbb{C} I.
\]
Hence $\calR_\Lambda$ is a factor, and so is $\calR_{\Lambda^c}$.

Since $\varphi$ is quasi-equivalent to $\varphi_\Lambda \otimes \varphi_{\Lambda^c}$ it follows that there is a normal isomorphism $\tau: \pi_\varphi(\alg{A}_\Gamma)'' \to \pi_{\varphi_\Lambda}(\alg{A}_\Lambda)''\,\, \overline{\otimes}\,\, \pi_{\varphi_{\Lambda^c}}(\alg{A}_{\Lambda_c})''$.
The notation $\mathcal{N} \overline{\otimes} \mathcal{M}$ denotes the von Neumann-algebraic tensor product, which by definition is the smallest von Neumann algebra containing the algebraic tensor product $\mathcal{N} \odot \mathcal{M}$.
Because the tensor product of two von Neumann algebras is Type I if and only if both factors are Type I, it follows that $\pi_{\varphi_\Lambda}(\alg{A}_\Lambda)''$ must be Type I, and similarly for $\pi_{\varphi_{\Lambda^c}}(\alg{A}_{\Lambda^c})''$.
Finally, since $\calR_\Lambda$ is a factor, every subrepresentation of $\pi_\Lambda := \pi_\varphi | \alg{A}_\Lambda$ is quasi-equivalent to $\pi_{\Lambda}$ itself.
This is true in particular for $\pi_{\varphi_\Lambda}$, and hence $\calR_\Lambda$ must be of Type I as well.
The same is true for $\calR_{\Lambda^c}$.

Finally, since $\calR_{\Lambda}$ is of Type I, there are Hilbert spaces $\calH_1$ and $\calH_2$ and a unitary $U : \calH_{\varphi} \to \calH_1 \otimes \calH_2$, with
\[
	U \calR_\Lambda U^* = \alg{B}(\calH_1) \> \overline{\otimes} \> I, \quad\textrm{ and }\quad	U \calR_{\Lambda^c} U^* \subset I \> \overline{\otimes} \> \alg{B}(\calH_2).
\]
The inclusion follows because $\calR_{\Lambda^c} \subset \calR_\Lambda'$ by locality, and because $(\alg{B}(\calH_1) \overline{\otimes} I)' = I \> \overline{\otimes}\> \alg{B}(\calH_2)$.
Because $\calR_\Lambda$ and $\calR_{\Lambda^c}$ generate $\alg{B}(\calH_{\varphi})$, it follows that in fact it must be an equality.
Therefore $\calR_\Lambda = \calR_{\Lambda^c}'$.
\end{proof}
\begin{remark}
As is shown in the references cited above, the factors being Type~I implies that $\varphi$ is quasi-equivalent to a product state.
However, Haag duality does not necessarily imply the split property.
\end{remark}

This allows us to prove that if the reference is a product state with respect to a cone, there are no non-trivial representations that are both strictly localizable and transportable.
In other words, the superselection structure is trivial.
We will in fact slightly relax the superselection criterion, and only assume that the representations $\pi$ of interest are \emph{quasi-}equivalent to $\pi_0$.
More precisely, we will be interested in representation $\pi$ such that
\begin{equation}
	\label{eq:nsselect}
        \pi_0|\calA_{\Lambda^c} \sim_{q.e.} \pi|\calA_{\Lambda^c},
\end{equation}
for \emph{all} cones $\Lambda$.
This is true in particular when $\pi$ is unitarily equivalent to $n \cdot \pi_0$ when restricted to observables outside a cone.
Here $n \cdot \pi_0$ is the direct sum of $n$ copies of $\pi_0$, as usual.
The reason to allow this relaxation is that such representations can be constructed naturally when considering non-abelian models~\cite{Szlachanyiv93,Naaijkens2015}.
Note that the condition that~\ref{eq:nsselect} should hold for \emph{every} cone $\Lambda$ is very strong, and as we argued above, captures precisely the localization properties one expects from anyons in 2D.
    The fact that it holds for every cone $\Lambda$ often allows us to draw conclusions about all cones from a result for a single, fixed cone (up to quasi-equivalence).

The following proof is inspired by Proposition~4.2 of~\cite{Mueger99}.
\begin{theorem}\label{thm:trivial}
Let $\omega_0$ be a pure state such that its GNS representation $\pi_0$ is quasi-equivalent to $\pi_\Lambda \otimes \pi_{\Lambda^c}$, with $\pi_\Lambda$ and $\pi_{\Lambda^c}$ irreducible representations of $\calA_\Lambda$ and $\calA_{\Lambda^c}$ respectively.
Consider $\omega_0$ to be the reference state in the superselection criterion.
Then the corresponding sector theory is trivial, in the sense that each representation $\pi$ satisfying the selection criterion~\eqref{eq:nsselect} is
quasi-equivalent to $\pi_0$.
In particular, if $\pi$ is irreducible, then $\pi$ and $\pi_0$ are equivalent.
\end{theorem}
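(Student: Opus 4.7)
My plan is to reduce to the case where $\pi_{0}$ is a literal tensor product $\pi_{\Lambda}\otimes\pi_{\Lambda^{c}}$, use the selection criterion at $\Lambda$ to decompose $\pi$ itself as $\widehat{\pi}_{\Lambda}\otimes\pi_{\Lambda^{c}}$, and then apply the criterion at a cone $\Lambda'$ disjoint from $\Lambda$ to identify $\widehat{\pi}_{\Lambda}$ with $\pi_{\Lambda}$ up to quasi-equivalence. Throughout I will use the elementary observation that if two representations of $\caA_{\Gamma}$ are quasi-equivalent then so are their restrictions to any $C^{*}$-subalgebra (the normal $*$-isomorphism between the double commutants carries ultra-weak closures to ultra-weak closures), which legitimises replacing $\pi_{0}$ by $\pi_{\Lambda}\otimes\pi_{\Lambda^{c}}$ on $\caH_{\Lambda}\otimes\caH_{\Lambda^{c}}$ at the outset and transferring all quasi-equivalences to subalgebras as needed.

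With this reduction, $\pi_{0}|_{\caA_{\Lambda^{c}}}$ becomes the ampliation $I_{\caH_{\Lambda}}\otimes\pi_{\Lambda^{c}}$ of the \emph{irreducible} representation $\pi_{\Lambda^{c}}$. The selection criterion~\eqref{eq:nsselect} applied at $\Lambda$ then says $\pi|_{\caA_{\Lambda^{c}}}$ is quasi-equivalent to an ampliation of an irreducible representation, which via the structure theory of type~I factors produces a Hilbert space $\caK$ and a unitary $U:\caH_{\pi}\to\caK\otimes\caH_{\Lambda^{c}}$ with
\[
    U\pi(B)U^{*}=I_{\caK}\otimes\pi_{\Lambda^{c}}(B),\qquad B\in\caA_{\Lambda^{c}}.
\]
Locality yields $\pi(\caA_{\Lambda})\subset\pi(\caA_{\Lambda^{c}})'$, and combined with the irreducibility $\pi_{\Lambda^{c}}(\caA_{\Lambda^{c}})''=\caB(\caH_{\Lambda^{c}})$ this forces $U\pi(\caA_{\Lambda})U^{*}\subset\caB(\caK)\otimes I_{\caH_{\Lambda^{c}}}$. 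Hence there is a representation $\widehat{\pi}_{\Lambda}$ of $\caA_{\Lambda}$ on $\caK$ satisfying $U\pi(A)U^{*}=\widehat{\pi}_{\Lambda}(A)\otimes I_{\caH_{\Lambda^{c}}}$, so that $\pi$ is unitarily equivalent to $\widehat{\pi}_{\Lambda}\otimes\pi_{\Lambda^{c}}$.

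To finish, pick any cone $\Lambda'$ disjoint from $\Lambda$, so that $\caA_{\Lambda}\subset\caA_{\Lambda'^{c}}$, and apply the selection criterion at $\Lambda'$ to obtain $\pi|_{\caA_{\Lambda'^{c}}}\sim_{q.e.}\pi_{0}|_{\caA_{\Lambda'^{c}}}$; restriction to $\caA_{\Lambda}$ yields $\pi|_{\caA_{\Lambda}}\sim_{q.e.}\pi_{0}|_{\caA_{\Lambda}}$. The two tensor-product decompositions identify these restrictions with $\widehat{\pi}_{\Lambda}$ and $\pi_{\Lambda}$ respectively (each being an ampliation of the irreducible piece, hence quasi-equivalent to it), so $\widehat{\pi}_{\Lambda}\sim_{q.e.}\pi_{\Lambda}$ and therefore $\pi\sim_{q.e.}\pi_{\Lambda}\otimes\pi_{\Lambda^{c}}\sim_{q.e.}\pi_{0}$, as required. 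The main technical delicacy is extracting the spatial tensor decomposition of $\pi$ from mere quasi-equivalence on $\caA_{\Lambda^{c}}$: this is where the irreducibility of both factors $\pi_{\Lambda},\pi_{\Lambda^{c}}$, guaranteed together with the type~I factor property of Lemma~\ref{lem:split} by the product-state hypothesis on $\omega_{0}$, is used essentially, and it also shows why the criterion must hold for \emph{every} cone rather than just one — the second, disjoint cone is exactly what transports the required information from $\caA_{\Lambda^{c}}$ back onto $\caA_{\Lambda}$.
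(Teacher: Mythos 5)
Your proposal is correct and follows essentially the same route as the paper's proof: use the criterion at $\Lambda$ together with irreducibility of $\pi_{\Lambda^c}$ and locality to write $\pi$ (up to a unitary) as $\widehat{\pi}_\Lambda \otimes \pi_{\Lambda^c}$, then invoke the criterion at a cone $\Lambda'$ with $\Lambda \subset (\Lambda')^c$ to conclude $\widehat{\pi}_\Lambda \sim_{q.e.} \pi_\Lambda$. The only difference is cosmetic: where you say the two tensor decompositions give quasi-equivalence of $\pi$ with $\pi_\Lambda\otimes\pi_{\Lambda^c}$, the paper makes the last step explicit by noting that $\widehat{\pi}_\Lambda$, being quasi-equivalent to the irreducible $\pi_\Lambda$, is unitarily a multiple $\pi_\Lambda \otimes \unit_{\caK_1}$, which immediately yields the ampliation of $\pi_0$.
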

\begin{proof}
Because $\pi\vert_{\caA_{\Lambda^c}}$ is quasi-equivalent to 
$\pi_0\vert_{\caA_{\Lambda^c}}$, which is quasi-equivalent to $\pi_{\Lambda^c}$,
and $\pi_{\Lambda^c}$ is irreducible,
there is a Hilbert space $\caK$  and a unitary $W:\caH\to \caH_{\Lambda^c}\otimes\caK$
such that 
\begin{align}
W \pi(B) W^*=\pi_{\Lambda^c}(B)\otimes\unit_{\caK},\quad B\in \caA_{\Lambda^c}.
\end{align}
    Because $\pi_{\Lambda^c}(\caA_{\Lambda^c})''$ is a Type I factor, it follows that
\[
    \left(\pi_{\Lambda^c}(\alg{A}_{\Lambda^c}) \otimes \unit_{\caK}\right)' = \unit_{\caH_{\Lambda_c}} \otimes \caB(\caK).
\]
By the commutativity of $\caA_{\Lambda}$ and $\caA_{\Lambda^c}$, it follows that for all $A \in \caA_{\Lambda}$, we have that 
$W \pi(A) W^* \in  \left(\pi_{\Lambda^c}(\alg{A}_{\Lambda^c}) \otimes \unit_{\caK}\right)' $.
Thus we see that there is a representation
$\rho$ of $\caA_{\Lambda}$ on $\caK$
such that
\begin{align}\label{wpw}
W\pi(A)W^*=\unit_{\caH_{\Lambda^c}}\otimes \rho(A),\quad
A\in \caA_{\Lambda}.
\end{align}
Consider a cone $\Lambda'$ such that $\Lambda \subset (\Lambda')^c$.
Then, by applying the superselection criterion and restricting to the cone $\Lambda$, it follows that
the representation $\pi\vert_{\caA_{ \Lambda}}$ is quasi-equivalent to 
$\pi_0\vert_{\caA_{\Lambda}}$, which in turn is quasi-equivalent to the irreducible representation
$\pi_{\Lambda}$.
On the other hand, from equation~\eqref{wpw}, 
$\rho$ is quasi-equivalent to  $\pi\vert_{\caA_{\Lambda}}$.
Hence $\rho$  is quasi-equivalent to the irreducible $\pi_{\Lambda}$.
Therefore, there are a Hilbert space $\caK_1$ and a unitary $V: \caK\to \caH_\Lambda\otimes \caK_1$
such that
\begin{align}
V\rho(A)V^*=\pi_{\Lambda}(A)\otimes \unit_{\caK_1},\quad A\in\caA_{\Lambda}.
\end{align}
Hence we get
\begin{align}
\lmk \unit_{\Lambda^c}\otimes V\rmk W\pi(AB) W^* \lmk \unit_{\Lambda^c}\otimes V\rmk^*
=\pi_{\Lambda^c}(B)\otimes\pi_{\Lambda}(A)\otimes\unit_{\caK_1}
\end{align}
for all $ A\in\caA_{\Lambda}$ and $B\in\caA_{\Lambda^c}$.
As the right hand side is quasi-equivalent to $\pi_0$, 
$\pi$ is quasi-equivalent to $\pi_0$.
\end{proof}

\begin{remark}
Note that the assumption in the theorem is a 2D analogue of the split property for 1D spin chains.
It should be noted that it does \emph{not} hold for models such as the toric code, which have non-trivial excitations (or sectors) localized in cones.
The reason is that the ground state has long-range entanglement and cannot be converted into a product state with local operations.
However, as we already mentioned in the introduction, we still have the \emph{approximate} or \emph{distal} split property: a Type I factor $\mathcal{R}_{\Lambda_1} \subset F \subset \mathcal{R}_{\Lambda_2^c}'$ exists if the boundary of the cones $\Lambda_1 \subset \Lambda_2$ are sufficiently distant \cite{FiedlerN}.
What is ``sufficiently distant'' depends on the model, as mentioned in the introduction.
    In general, for example if we perturb with a quasi-local automorphism with a non-zero Lieb-Robinson bound, we need to have that the cone $\Lambda_2$ has a wider opening angle than $\Lambda_1$ as well.
In any case, if the (strict) split property does not hold, it is no longer possible to decompose the representation as a tensor product of representations of $\alg{A}_{\Lambda}$ and $\alg{A}_{\Lambda^c}$.
\end{remark}

The theorem says that, as expected, the product state does not have any non-trivial superselection sectors.
For a general state without long range entanglement, we can try use the quasi-local automorphism $\alpha$ from Definition~\ref{def:lre} to relate the sectors of $\omega \circ \alpha$ with those of $\omega$.
In general there is no reason why $\omega$ should be quasi-equivalent to $\omega \circ \alpha$, so it does not follow directly that $\omega \circ \alpha$ has trivial sectors.
    However if $\alpha$ comes from quasi-local dynamics satisfying Theorem~\ref{thm:quasiauto}, we can relate the sectors of $\pi_\omega$ and $\pi_\omega \circ \alpha$.
    The key point is that we can almost ``factorize'' the automorphism $\alpha$ into automorphisms acting on a cone $\Lambda$ and its complement, up to conjugation with a unitary in $\calA_\Gamma$ and an automorphism acting non-trivially only near the border of $\Lambda$.
More precisely, we will consider $\alpha$ that are quasi-factorizable in the sense of Definition~\ref{def:quasifactor}.
In Section~\ref{sec:approxsplit} we will show how such automorphisms can be obtained using Theorem~\ref{thm:quasiauto}.

\begin{theorem}\label{thm:invariant}
Let $(\caH_0,\pi_0)$ be a representation.
{Let $\alpha$ be a quasi-local automorphism such that for every cone $\Lambda$, we can find an inclusion of cones $\Gamma_1 \subset \Lambda \subset \Gamma_2$ such that $\alpha$ is quasi-factorizable with repsect to this inclusion.}
Suppose that a representation $\pi$ satisfies the superselection criterion
for $\pi_0$ in the sense that
for all cones $\Lambda$ in $\mathbb{Z}^2$, we have
\begin{align}
\pi\vert_{\caA_{\Lambda^c}}\sim_{q.e.} \pi_0\vert_{\caA_{\Lambda^c}}.
\end{align}
Then  $\pi \circ \alpha$  satisfies the superselection criterion~\eqref{eq:nsselect}
for $\pi_0 \circ \alpha$
\end{theorem}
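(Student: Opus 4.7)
The plan is to reduce the verification of the superselection criterion for $\pi \circ \alpha$ (relative to $\pi_0 \circ \alpha$) cone by cone to the hypothesis on $\pi$, by exploiting the quasi-factorized form of $\alpha$. Fix a cone $\Lambda$ and, using the assumption, choose an inclusion of cones $\Gamma_1 \subset \Lambda \subset \Gamma_2$ for which $\alpha$ is quasi-factorizable. By Definition~\ref{def:quasifactor} we may write
\[
\alpha = \Ad(u) \circ \widetilde{\Xi} \circ (\alpha_\Lambda \otimes \alpha_{\Lambda^c}),
\]
with $u \in \caU(\caA_\Gamma)$, $\alpha_\Lambda \in \mathrm{Aut}(\caA_\Lambda)$, $\alpha_{\Lambda^c} \in \mathrm{Aut}(\caA_{\Lambda^c})$, and $\Xi \in \mathrm{Aut}(\caA_{\Gamma_2 \setminus \Gamma_1})$.

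The next step is to observe that on $\caA_{\Lambda^c}$ the decomposition collapses to something very simple. Since $\alpha_\Lambda \otimes \alpha_{\Lambda^c}$ restricts to $\alpha_{\Lambda^c}$ on $\caA_{\Lambda^c}$, and decomposing $\Lambda^c = (\Gamma_2 \setminus \Lambda) \sqcup \Gamma_2^c$ shows that $\widetilde{\Xi}$ sends $\caA_{\Lambda^c}$ into $\caA_{(\Gamma_2 \setminus \Gamma_1) \cup \Gamma_2^c} = \caA_{\Gamma_1^c}$, the map
\[
\beta := \widetilde{\Xi} \circ \alpha_{\Lambda^c} : \caA_{\Lambda^c} \longrightarrow \caA_{\Gamma_1^c}
\]
is an injective $*$-homomorphism, and for $A \in \caA_{\Lambda^c}$ one has $\alpha(A) = u\, \beta(A)\, u^*$. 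Consequently, for any representation $\sigma$ of $\caA_\Gamma$, the representation $\sigma \circ \alpha \vert_{\caA_{\Lambda^c}}$ of $\caA_{\Lambda^c}$ is unitarily equivalent (via $\sigma(u)$) to $\sigma \circ \beta$. Applied to both $\pi$ and $\pi_0$, the proof reduces to showing that $\pi \circ \beta$ and $\pi_0 \circ \beta$ are quasi-equivalent as representations of $\caA_{\Lambda^c}$.

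For this, I would use the superselection hypothesis on $\pi$ with the cone $\Gamma_1$ itself: this gives a normal $*$-isomorphism $\phi : \pi(\caA_{\Gamma_1^c})'' \to \pi_0(\caA_{\Gamma_1^c})''$ with $\phi \circ \pi = \pi_0$ on $\caA_{\Gamma_1^c}$. Since $\beta(\caA_{\Lambda^c}) \subseteq \caA_{\Gamma_1^c}$, restricting $\phi$ to $(\pi \circ \beta(\caA_{\Lambda^c}))''$ gives a normal isomorphism onto $(\pi_0 \circ \beta(\caA_{\Lambda^c}))''$ intertwining $\pi \circ \beta$ with $\pi_0 \circ \beta$. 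That is precisely the required quasi-equivalence, which combined with the unitary implementation by $\pi(u)$ and $\pi_0(u)$ yields $\pi \circ \alpha \vert_{\caA_{\Lambda^c}} \sim_{q.e.} \pi_0 \circ \alpha \vert_{\caA_{\Lambda^c}}$.

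The argument is essentially formal once the quasi-factorization is available, so I do not anticipate a serious obstacle beyond the set-theoretic bookkeeping needed to verify that $\widetilde{\Xi}(\caA_{\Lambda^c}) \subseteq \caA_{\Gamma_1^c}$: the ``boundary'' part of $\Lambda^c$ lying in $\Gamma_2 \setminus \Gamma_1$ may be smeared by $\Xi$, but only within $\caA_{\Gamma_2 \setminus \Gamma_1}$, while the part in $\Gamma_2^c$ is untouched. All the analytic substance of the theorem has already been absorbed into Definition~\ref{def:quasifactor} and into Theorem~\ref{thm:quasiauto}, which is what supplies such factorizations in concrete examples.
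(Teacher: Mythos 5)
Your proposal is correct and follows essentially the same route as the paper: factor $\alpha$ via Definition~\ref{def:quasifactor}, observe that on $\caA_{\Lambda^c}$ it reduces (up to $\Ad(u)$, implemented by $\pi(u)$ resp.\ $\pi_0(u)$) to the injective morphism $\widetilde{\Xi}\circ\alpha_{\Lambda^c}$ landing in $\caA_{\Gamma_1^c}$, and then invoke the superselection criterion for the cone $\Gamma_1$. Your extra step of restricting the normal isomorphism $\phi$ to the generated von Neumann subalgebra just makes explicit the paper's assertion that quasi-equivalence is preserved under precomposition with $\widetilde{\Xi}\circ\alpha_{\Lambda^c}\vert_{\caA_{\Lambda^c}}$.
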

\begin{proof}
Let $\Lambda$ be a cone.
We will show that 
\begin{align}
\pi\circ\alpha\vert_{\caA_{\Lambda^c}}\sim_{q.e.} \pi_0\circ\alpha\vert_{\caA_{\Lambda^c}}.
\end{align}
By assumption we can factorize $\alpha$ as
\begin{align}
    \alpha=\Ad(u)\circ\widetilde{\Xi} \circ \lmk\alpha_{\Lambda}\otimes \alpha_{\Lambda^c}\rmk,
\end{align}
as in Definition~\ref{def:quasifactor}.
From this, for any $A\in\caA_{\Lambda^c}$, we have
\begin{equation}
\begin{split}
    \pi\circ\alpha(A) &=\pi\circ\Ad(u)\circ\widetilde{\Xi}(\alpha_{\Lambda^c}(A))  \\
                      &=\Ad\lmk\pi(u)\rmk\circ \pi \circ \widetilde{\Xi}(\alpha_{\Lambda^c}(A)) \\
                      &=\Ad\lmk\pi(u)\rmk\circ \pi \vert_{\caA_{\Gamma_1^c}}\circ \widetilde{\Xi}(\alpha_{\Lambda^c}(A)).
\end{split}
\end{equation}
This implies
\begin{align}\label{aaa}
    \pi\circ\alpha\vert_{\caA_{\Lambda^c}}
    \sim_{q.e.} \pi \vert_{\caA_{\Gamma_1^c}}\circ \widetilde{\Xi}\circ\alpha_{\Lambda^c}\vert_{\caA_{\Lambda^c}}.
\end{align}
(In fact this is even a unitary equivalence).
Similarly, we have
\begin{align}\label{bbb}
    \pi_0\circ\alpha\vert_{\caA_{\Lambda^c}}
    \sim_{q.e.} \pi_0 \vert_{\caA_{\Gamma_1^c}}\circ \widetilde{\Xi} \circ \alpha_{\Lambda^c}\vert_{\caA_{\Lambda^c}}.
\end{align}
Because we have $\pi\vert_{\caA_{\Gamma_1^c}}\sim_{q.e.} \pi_0\vert_{\caA_{\Gamma_1^c}}$ by virtue of the superselection criterion,
we get
\begin{align}
    \pi \vert_{\caA_{\Gamma_1^c}}\circ \widetilde{\Xi}\circ \alpha_{\Lambda^c}\vert_{\caA_{\Lambda^c}}
    \sim_{q.e.} \pi_0 \vert_{\caA_{\Gamma_1^c}}\circ \widetilde{\Xi} \circ \alpha_{\Lambda^c}\vert_{\caA_{\Lambda^c}}.
\end{align}
Combining this with (\ref{aaa}) and (\ref{bbb}), we get
\begin{align}
    \pi\circ\alpha\vert_{\caA_{\Lambda^c}}\sim_{q.e.}
\pi_0\circ\alpha\vert_{\caA_{\Lambda^c}}.
\end{align}
This proves the claim.
\end{proof}

Combining the two theorems in this section then shows that short-range entangled states indeed have a trivial sector structure.
\begin{corollary}\label{cor:triviality}
Let $(\caH_0,\pi_0)$ be an irreducible representation
which factorizes as  $\pi_0=\pi_{\Lambda}\otimes \pi_{\Lambda^c}$
for some cone $\Lambda$, where $(\pi_{\Lambda},\caH_{\Lambda})$, 
$(\pi_{\Lambda^c},\caH_{\Lambda^c})$
are irreducible representations of $\caA_{\Lambda}$, $\caA_{\Lambda^c}$
respectively.
Let $\alpha$ be a quasi-local automorphism which is quasi-factorizable for all cones $\Lambda$.
Suppose that a representation $\pi$ satisfies the superselection criterion
for $\pi_0 \circ \alpha$ in the sense that
for all cones $\widetilde\Lambda$ in $\mathbb{Z}^2$, we have
\begin{align}
\pi\vert_{\caA_{\widetilde\Lambda^c}}\sim_{q.e.} \pi_0 \circ \alpha\vert_{\caA_{\widetilde\Lambda^c}}.
\end{align}
Then  $\pi$ is quasi-equivalent to $\pi_0 \circ \alpha$. 
In particular, if $\pi$ is irreducible, then $\pi$ and $\pi_0$ are equivalent.
\end{corollary}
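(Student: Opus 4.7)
The plan is to combine Theorem~\ref{thm:invariant} with Theorem~\ref{thm:trivial}. Since $\pi_0$ is a product state across the cone $\Lambda$ but the reference here is $\pi_0 \circ \alpha$, the natural move is to transfer the problem to $\pi_0$ by composing with $\alpha^{-1}$: the goal is to show that $\pi \circ \alpha^{-1}$ satisfies the superselection criterion with respect to $\pi_0$ and then invoke Theorem~\ref{thm:trivial}. Once this is done, $\pi \circ \alpha^{-1} \sim_{q.e.} \pi_0$, and composing with $\alpha$ on the right immediately yields $\pi \sim_{q.e.} \pi_0 \circ \alpha$. The final statement of the corollary follows because quasi-equivalence of irreducible representations is unitary equivalence, and $\pi_0 \circ \alpha$ is irreducible whenever $\pi_0$ is.

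Showing that $\pi \circ \alpha^{-1}$ satisfies the selection criterion for $\pi_0$ is precisely the conclusion of Theorem~\ref{thm:invariant}, applied with the roles $\alpha \mapsto \alpha^{-1}$ and $\pi_0 \mapsto \pi_0 \circ \alpha$: its hypothesis---that $\pi$ satisfies the criterion for $\pi_0 \circ \alpha$---is exactly what we assume, and its conclusion says that $\pi \circ \alpha^{-1}$ satisfies the criterion for $\pi_0 \circ \alpha \circ \alpha^{-1} = \pi_0$. All that is required, beyond our assumptions, is that $\alpha^{-1}$ itself be quasi-factorizable for every cone.

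Verifying that $\alpha^{-1}$ is quasi-factorizable is the main technical obstacle. Inverting the defining identity $\alpha = \Ad(u) \circ \widetilde{\Xi} \circ (\alpha_\Lambda \otimes \alpha_{\Lambda^c})$ and pushing the inner automorphism to the outermost slot gives
\[
    \alpha^{-1} = \Ad(v) \circ (\alpha_\Lambda^{-1} \otimes \alpha_{\Lambda^c}^{-1}) \circ \widetilde{\Xi}^{-1},
\]
with $v = (\alpha_\Lambda^{-1} \otimes \alpha_{\Lambda^c}^{-1})(\widetilde{\Xi}^{-1}(u^*))$. The rightmost two factors appear in the reverse order from Definition~\ref{def:quasifactor}, so one must commute them, which in turn forces a slight enlargement of the annulus $\Gamma_2' \setminus \Gamma_1'$ to absorb the spreading of the cone automorphisms through $\widetilde{\Xi}^{-1}$. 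Because quasi-factorizability of $\alpha$ is assumed for \emph{every} cone, one has enough flexibility in the choice of the new inclusion $\Gamma_1'' \subsetneq \Gamma_1' \subset \Lambda \subset \Gamma_2' \subsetneq \Gamma_2''$ to accommodate this, and one can alternatively circumvent the bookkeeping entirely by directly imitating the argument of Theorem~\ref{thm:invariant} for $\alpha^{-1}$: for $A \in \caA_{\Lambda^c}$ one checks that $\widetilde{\Xi}^{-1}(A) \in \caA_{\Gamma_1'^c}$, and the selection criterion applied on the complement of a slightly smaller cone absorbs the remaining non-locality of $\alpha_\Lambda^{-1} \otimes \alpha_{\Lambda^c}^{-1}$, closing the argument.
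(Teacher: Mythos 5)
Your top-level route is exactly the paper's: apply Theorem~\ref{thm:invariant} to $\alpha^{-1}$ with reference $\pi_0\circ\alpha$, conclude via Theorem~\ref{thm:trivial} that $\pi\circ\alpha^{-1}\sim_{q.e.}\pi_0$, and compose back with $\alpha$; the final statement about irreducible $\pi$ is handled the same way. The paper's own proof simply \emph{asserts} that $\alpha^{-1}$ is again quasi-local and quasi-factorizable, so the only substantive content you added is your justification of that step --- and that justification is where the gap lies.

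Both of your sketches founder on the same point: $\alpha_\Lambda$ and $\alpha_{\Lambda^c}$ are arbitrary automorphisms of $\caA_\Lambda$ and $\caA_{\Lambda^c}$, with no locality \emph{within} their cones. Consequently (i) commuting $\alpha_\Lambda^{-1}\otimes\alpha_{\Lambda^c}^{-1}$ past $\widetilde{\Xi}^{-1}$ produces the conjugate $(\alpha_\Lambda^{-1}\otimes\alpha_{\Lambda^c}^{-1})\circ\widetilde{\Xi}^{-1}\circ(\alpha_\Lambda\otimes\alpha_{\Lambda^c})$, which need not be localized in \emph{any} enlarged annulus $\Gamma_2''\setminus\Gamma_1''$, so the ``absorb the spreading by enlarging the annulus'' step is not available; and (ii) in your direct imitation of Theorem~\ref{thm:invariant}, after $\widetilde{\Xi}^{-1}(A)\in\caA_{(\Gamma_1')^c}$ for $A\in\caA_{\Lambda^c}$, the map $\alpha_\Lambda^{-1}\otimes\alpha_{\Lambda^c}^{-1}$ sends $\caA_{(\Gamma_1')^c}=\caA_{\Lambda\setminus\Gamma_1'}\otimes\caA_{\Lambda^c}$ into $\caA_\Lambda\otimes\caA_{\Lambda^c}$, i.e.\ potentially into all of $\caA_\Gamma$, so its image lies in no cone complement and the selection criterion cannot ``absorb'' it. The argument does close if you instead start from $A\in\caA_{(\Gamma_2')^c}$: there $\widetilde{\Xi}^{-1}(A)=A$ and $(\alpha_\Lambda^{-1}\otimes\alpha_{\Lambda^c}^{-1})(A)=\alpha_{\Lambda^c}^{-1}(A)\in\caA_{\Lambda^c}$, so both $\pi\circ\alpha^{-1}$ and $(\pi_0\circ\alpha)\circ\alpha^{-1}=\pi_0$, restricted to $\caA_{(\Gamma_2')^c}$, are unitarily equivalent to $\pi|_{\caA_{\Lambda^c}}$ resp.\ $(\pi_0\circ\alpha)|_{\caA_{\Lambda^c}}$ composed with the \emph{same} homomorphism, and the hypothesis for the cone $\Lambda$ yields the quasi-equivalence on $\caA_{(\Gamma_2')^c}$. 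For an arbitrary cone $\widetilde\Lambda$ one then needs a factorization with $\Gamma_2'\subset\widetilde\Lambda$ (available for the automorphisms constructed in Section~\ref{sec:approxsplit}, where the outer cone's angle can be taken only slightly larger than that of an auxiliary cone placed well inside $\widetilde\Lambda$), together with the fact that quasi-equivalence passes to restrictions to subalgebras (equal folia restrict), since $\caA_{\widetilde\Lambda^c}\subset\caA_{(\Gamma_2')^c}$. This extra care is genuinely needed: inversion of Definition~\ref{def:quasifactor} puts the annulus automorphism on the wrong side of the tensor factor, so quasi-factorizability of $\alpha$ does not formally imply that of $\alpha^{-1}$, and your proposal as written does not bridge that step.
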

\begin{proof}
If $\alpha$ is a quasi-local automorphism, the same is true for $\alpha^{-1}$, and it is quasi-factorizable as well.
Because $\pi$ satisfies the superselection criterion
for $\pi_0 \circ \alpha$ and $\alpha^{-1}$ is a quasi-local automorphism,
by Theorem \ref{thm:invariant},
$\pi\circ\alpha^{-1}$  satisfies the superselection criterion
for $\pi_0\circ\alpha\circ\alpha^{-1}=\pi_0$.
 Then by Theorem \ref{thm:trivial},
$\pi\circ\alpha^{-1}$ is quasi-equivalent to $\pi_0$.
From this, it follows that $\pi$ is quasi-equivalent to $\pi_0 \circ \alpha$. 
\end{proof}

Note that this applies in particular to states which are not long-range entangled according to Definition~\ref{def:lre}.
    Indeed, suppose that $\omega$ is a pure state, and $\alpha$ a quasi-factorizable automorphism such that $\omega \circ \alpha = \omega_\Lambda \otimes \omega_{\Lambda^c}$ for some cone $\Lambda$ and states $\omega_\Lambda$ of $\calA_{\Lambda}$ and $\omega_{\Lambda^c}$ of $\calA_{\Lambda^c}$.
Then $\omega_0 := \omega_\Lambda \otimes \omega_{\Lambda^c}$ is a pure state, and so must be $\omega_\Lambda$ and $\otimes_{\Lambda^c}$, as otherwise we could write $\omega_0$ as a non-trivial convex combination of two distinct states.
But then the GNS representation $\pi_0$ of $\omega_0$ satisfies the assumptions of Corollary~\ref{cor:triviality}.
Since $\pi_0 \circ \alpha$ is a GNS representation for $\omega \circ \alpha$, it follows that $\omega \circ \alpha$ has no non-trivial sectors.

\begin{remark}
    We argued that a state that satisfies the strict split property for a given cone is trivial in the sense that there are no anyonic excitations (superselection sectors).
    It is however possible to further classify this trivial sector, for example if there is an on-site symmetry $G$.
    In that case, it is natural to demand that two states are only in the same gapped phase if they can be connected by a continuous path of gapped Hamiltonians respecting the $G$-symmetry~\cite{ChenGW}.
In two dimensions, the set of states that are in the trivial phase (i.e., containing the product state with respect to each site) can then be classified by a cocycle in $H^3(G, U(1))$~\cite{Ogata21}.
    However, in our definition, the absence of long-range entanglement does not necessarily imply that the state is such a product of single-site states.
    It seems plausible that if we demand the split property to hold for \emph{any} cone, this would follow.
\end{remark}

We conclude this section with a brief discussion. 
Here, we focussed on necessary conditions for the existence of anyons.
While we have showed that long-range entanglement is a necessary condition, it remains an open problem to find \emph{sufficient} conditions.
In particular, there is no guarantee that a state with long-range entanglement has any non-trivial sectors at all (and in fact given the selection criterion~\ref{eq:nsselect} that should generally not be expected if the reference state is far from homogeneous). 
In addition, even if non-trivial sectors do exist, they are not necessarily anyons.
In fact, in three or higher spatial dimensions, cone-localized sectors have bosonic or fermionic statistics (cf.~\cite{BuchholzF}), but in 2D anyons are a possibility, as for example the abelian quantum double models show~\cite{FiedlerN}.
Although there is a technical condition that implies the corresponding category is modular (which in particular implies that all sectors are anyons), the physical interpretation of this criterion is unclear~\cite[Thm. 5.3]{NaaijkensKL}.

We focussed on the trivial phase here, but one can show that if there are \emph{non-}trivial sectors, the full braided tensor category describing the sectors is invariant under quasi-factorizable automorphisms~\cite{Ogata21a}.
This requires that approximate Haag duality holds, a weaker version of Haag duality that can be shown to be stable under quasi-factorizable automorphisms.
There is another natural generalization of the superselection criterion~\eqref{eq:sselect}, which does not require Haag duality, but a variant of the split property instead~\cite{ChaNN18}.
Given that the spectral flow is quasi-local, it is natural to look at representations that can be localized in cones up to some exponentially decaying error.
This leads to the notion of approximately localizable endomorphisms, and one can develop the full sector theory (including e.g. braiding of charges) using them.
These properties are stable upon applying the quasi-local spectral flow.
We should add the caveat that this is a result about \emph{approximately} localized sectors, i.e.\ localized up to some exponentially decaying error, and we cannot rule out that despite the absence of strictly localized sectors, there is a non-trivial \emph{approximately} localized sector.
In abelian quantum double models, this can be ruled out by imposing an ``energy criterion'', essentially excluding any possible confined charges~\cite{ChaNN18}.
We presently do not know if the absence of such sectors can be proven from more fundamental assumptions.
For example, in the case of strict localization it is not necessary.
The results in this section and in \cite{ChaNN18} strongly suggest that in a state with short-range entanglement, there are no approximately localizable sectors either.
 
\section{Approximate split property for cone algebras}\label{sec:approxsplit}
We apply the results of Section~\ref{sec:stable} to two-dimensional models, and give natural examples of quasi-factorizable automorphisms.
In Section~\ref{sec:lre} we have already discussed the split property for a cone and its complement.
As already mentioned, this strong version does not hold for, for example, abelian quantum double models, where only a weaker version is true~\cite{FiedlerN,Naaijkens12}.
This in turn is a key assumption in the stability of superselection sectors analysis in~\cite{ChaNN18}.
Although there we only need the approximate split property for the ``unperturbed'' model, it is interesting to know if it is in fact a property of the whole phase.
Hence, in this section, we show that for suitable perturbations this is indeed the case, and the perturbed model also satisfies the approximate split property.
For simplicity we restrict to 2D systems and finite range interactions, although we expect that with a more careful analysis, the results extend to a wider class of interactions and to systems in three or more spatial dimensions.

Let us recall that if $F$ is an $F$-function, $F_r(r) := e^{-r} F(r)$ is also an $F$-function.
This is an example of a \emph{weighted} $F$-function in the terminology of Ref.~\cite{NSY}.
Such weighted $F$-functions have favorable decay properties, as can be seen in the following Lemma.
\begin{lemma}
Let $(\Gamma,d)$ be $\mathbb{Z}^2$ with the usual metric.
Then there is a $C>0$ such that we have the following estimate for all $m > \sqrt{2}$:
\label{lem:gfdecay}
\begin{equation}
	G_{F_r}(m) \leq C F(m-\sqrt{2}) m e^{-m},
\end{equation}
where $G_{F_r}$ is as defined in equation~\eqref{gfdef}.
\end{lemma}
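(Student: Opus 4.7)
The plan is to replace the lattice sum defining $G_{F_r}(m)$ by an integral over $\mathbb{R}^2$ via a tiling argument, and then evaluate the resulting integral in polar coordinates. The only subtlety is absorbing the geometric ``slop'' between a lattice point $y$ and a nearby continuum point $z$ into the factor of $\sqrt{2}$ appearing on the right-hand side.

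Concretely, for each $y \in \mathbb{Z}^2$ let $S_y := y + [-1/2,1/2]^2$, so that the squares $\{S_y\}_{y \in \mathbb{Z}^2}$ tile $\mathbb{R}^2$ and every $z \in S_y$ satisfies $|z-y| \le \sqrt{2}/2$. Fix $x \in \mathbb{Z}^2$. By the triangle inequality, $|d(x,z) - d(x,y)| \le \sqrt{2}/2$ for $z \in S_y$. Using this together with the fact that $F$ is non-increasing, for any $y$ with $d(x,y) \ge m$ and any $z \in S_y$,
\begin{align*}
e^{-d(x,y)} F(d(x,y)) \;\le\; e^{\sqrt{2}/2}\, e^{-d(x,z)}\, F\!\bigl(d(x,z) - \tfrac{\sqrt{2}}{2}\bigr),
\end{align*}
provided $d(x,z) \ge \sqrt{2}/2$, which is automatic since $d(x,y) \ge m > \sqrt{2}$ forces $d(x,z) \ge m - \sqrt{2}/2 > \sqrt{2}/2$.

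Integrating this pointwise bound over $S_y$ (each square has area $1$) and summing over lattice points $y$ with $d(x,y)\ge m$, the squares $S_y$ are pairwise essentially disjoint and all contained in the annular region $A_m := \{z \in \mathbb{R}^2 : d(x,z) \ge m - \sqrt{2}/2\}$. Hence
\begin{align*}
\sum_{y:\, d(x,y)\ge m} e^{-d(x,y)} F(d(x,y))
\;\le\; e^{\sqrt{2}/2} \int_{A_m} e^{-d(x,z)} F\!\bigl(d(x,z) - \tfrac{\sqrt{2}}{2}\bigr)\, dz.
\end{align*}
On $A_m$ we have $d(x,z) - \sqrt{2}/2 \ge m - \sqrt{2}$, so monotonicity of $F$ gives $F(d(x,z) - \sqrt{2}/2) \le F(m-\sqrt{2})$ (and $m>\sqrt{2}$ ensures this argument is nonnegative). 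Passing to polar coordinates centred at $x$ yields
\begin{align*}
\int_{A_m} e^{-d(x,z)}\, dz \;=\; 2\pi \int_{m-\sqrt{2}/2}^{\infty} r\, e^{-r}\, dr \;=\; 2\pi\bigl(m - \tfrac{\sqrt{2}}{2} + 1\bigr) e^{-(m - \sqrt{2}/2)}.
\end{align*}
Combining everything and using $m > \sqrt{2} > 1$ to absorb $m - \sqrt{2}/2 + 1$ into a constant multiple of $m$, one obtains $G_{F_r}(m) \le C\, F(m-\sqrt{2})\, m\, e^{-m}$ with $C$ a suitable multiple of $e^{\sqrt{2}}$.

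I do not expect any real obstacle here: the argument is entirely elementary once the ``square-tiling + polar coordinates'' trick is set up, and the $\sqrt{2}$ in the statement is precisely accommodated by two applications of the half-diagonal slop $\sqrt{2}/2$ (once to control $e^{-d(x,y)}$ versus $e^{-d(x,z)}$, once to control $F(d(x,y))$ via monotonicity). The only thing to watch is that $m > \sqrt{2}$ is used both to make $m - \sqrt{2}$ lie in the domain of $F$ and to bound $m - \sqrt{2}/2 + 1$ by $O(m)$.
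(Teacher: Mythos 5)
Your proof is correct and follows essentially the same route as the paper: compare the lattice sum to a planar integral via a unit-square tiling, absorb the discretization error into the $\sqrt{2}$ shift, pass to polar coordinates, pull out $F(m-\sqrt{2})$ by monotonicity, and evaluate $\int r e^{-r}\,dr$. The only cosmetic difference is that you use squares centred at lattice points with two half-diagonal slops of $\sqrt{2}/2$, while the paper attaches the unit squares on one side and uses a single shift by $\sqrt{2}$; both give the same bound up to the constant.
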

\begin{proof}
By translation invariance of the metric and $\Gamma$ we do not need the supremum in equation~\eqref{gfdef}.
Hence we get
\begin{align*}
	G_{F_r}(m) &= \sum_{|x| \geq m} e^{-|x|} F(|x|) \\
			& \leq 2 \pi \int_m^\infty r e^{-r+\sqrt{2}} F(r-\sqrt{2})\,dr \\
			& \leq 2 \pi e^{\sqrt{2}} F(m-\sqrt{2}) \int_m^\infty r e^{-r}\,dr \\
			& \leq 4 \pi e^{\sqrt{2}} F(m-\sqrt{2}) m e^{-m}.
\end{align*}
This can be seen by noting that
\begin{equation}
\int_x^{x+1} \int_y^{y+1} e^{-|(x,y)|+\sqrt{2}} F(|(x,y)|) dx dy \geq e^{-|(x,y)|} F(|(x,y)|)
\end{equation}
for $x,y \geq 0$ (since $F$ is positive and decreasing), and doing a coordinate transformation to polar coordinates.
\end{proof}

It is possible to generalize the lemma to other suitable weightings $F_{g}(r) := e^{-g(r)} F(r)$ (see e.g.~\cite{ChaNN18}).
This could be necessary because in applications one would need to assume that interactions have finite interaction norm with respect to the \emph{weighted} $F$-function, instead of $F$ itself.
Since we will consider only bounded range interactions, this is not an issue for us and we restrict to the easier case for simplicity.

\begin{theorem}
\label{thm:conesum}
Let $\Gamma = \mathbb{Z}^2$ with the usual metric $d$ and consider the corresponding quantum spin system $\calA_\Gamma$, where the local dimension of the spins is uniformly bounded.
Let $t \mapsto \Phi(X;t)$ be a path of dynamics such that $\|\Phi(X;t)\|$ is uniformly bounded both in $X$ and $t$.
Moreover assume that $\Phi$ is of bounded range, and let $F$ be an $F$-function. 
Then $\Phi \in \caB_{F_r}([0,1])$, and it generates quasi-local dynamics $\tau^\Phi_{t,s}$.
Assume that $\Gamma_1 \subset \Gamma_2$ is an inclusion of cones such that their borders are sufficiently far away, in the sense that the lines marking the boundaries of the cones are not parallel.
Then there exist cones $\Gamma_1' \subset \Gamma_1$ and $\Gamma_2' \supset \Gamma_2$ such that the conditions of Theorem~\ref{thm:quasiauto} are satisfied.
\end{theorem}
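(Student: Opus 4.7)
The plan is to verify the hypotheses of Theorem~\ref{thm:quasiauto} applied with the weighted $F$-function $F_r(s) = e^{-s} F(s)$; the factorization and all its consequences then follow immediately. The verification has three ingredients: membership $\Phi, \Phi_1 \in \caB_{F_r}([0,1])$; the decay conditions \eqref{as:galp} and \eqref{as:gf} for $F_r$; and the crucial summability condition \eqref{anan}. The first two are largely routine; the last is where the non-parallel boundary hypothesis enters.

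For $\Phi \in \caB_{F_r}([0,1])$, the bounded range $R$ gives $\Phi(X;t) = 0$ unless $\diam(X) \le R$, so $\sum_{X \ni x,y} \|\Phi(X;t)\|$ vanishes for $d(x,y) > R$ and is uniformly bounded otherwise (using bounded local norms and bounded $|X| \le \kappa R^\nu$). Since $F_r$ is bounded below by a positive constant on $[0,R]$, this bounds the $F_r$-norm uniformly in $t$, and the same argument works for $\Phi_1$. For the decay conditions, Lemma~\ref{lem:gfdecay} gives $G_{F_r}(n) \le C F(n-\sqrt{2}) n e^{-n}$, so $\sum_n (1+n)^{2\nu+1} G_{F_r}(n)^\alpha < \infty$ for any $\alpha \in (0,1)$; one then takes $\tilde F$ to be a suitably weighted $F$-function of the form $C' e^{-\alpha r / 6} F(r)$ (adjusted if necessary to satisfy the $F$-function axioms), which dominates both quantities in \eqref{as:gf}.

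For \eqref{anan}, choose $\Gamma_1'$ strictly inside $\Gamma_1$ and $\Gamma_2'$ strictly containing $\Gamma_2$ in such a way that the boundaries of the pairs $(\Gamma_1', \Gamma_1)$ and $(\Gamma_2, \Gamma_2')$ are angularly separated by some $\delta > 0$; this is possible because by assumption no pair of boundary lines is parallel. Bounded range and uniform norm yield the pointwise bound
\[
f(m,x,y) \le C_0\, \mathbf{1}_{d(x,y) \le R}\, \mathbf{1}_{d(x, (\Gamma_2' \setminus \Gamma_1')^c) \le m + R}.
\]
For the first summand of \eqref{anan}, the conditions $x \in \Gamma_1$, $y \in \Gamma_2^c$, $d(x,y) \le R$ confine $x$ to the strip $\{x \in \Gamma_1 : d(x, \Gamma_2^c) \le R\}$, which is a \emph{finite} set because the boundaries of $\Gamma_1$ and $\Gamma_2$ are not parallel and $\Gamma_1 \subset \Gamma_2$; the corresponding $y$ are likewise finite, and summability in $m$ then follows from summability of $G_{F_r}$. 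For the second summand, $x$ lies in an \emph{unbounded} strip of width $R$ around $\partial \Gamma_1 \cup \partial \Gamma_2$, so the key is the additional constraint $d(x, (\Gamma_2'\setminus\Gamma_1')^c) \le m + R$. Because of the angular gap $\delta$ between $(\Gamma_1', \Gamma_1)$ and $(\Gamma_2, \Gamma_2')$, the distance from such $x$ to $(\Gamma_2'\setminus\Gamma_1')^c = \Gamma_1' \cup (\Gamma_2')^c$ grows at least linearly with the distance from the apex, so this constraint restricts $x$ to a set of cardinality $\lesssim m / \delta$. The number of valid $y \in \Gamma_1 \cup \Gamma_2^c$ with $d(x,y) \le R$ is uniformly bounded, hence $\sum_{x,y} f(m,x,y) \lesssim m$, and finiteness of $\sum_m m\, G_{F_r}(m)$ follows from the exponential decay of $G_{F_r}$.

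The main obstacle is this last geometric estimate: one must coordinate the choices of $\Gamma_1'$ and $\Gamma_2'$ so as to achieve uniform positive angular separation from $\Gamma_1$ and $\Gamma_2$ respectively, while maintaining the nesting $\Gamma_1' \subset \Gamma_1 \subset \Gamma_2 \subset \Gamma_2'$, and then exploit the linear growth of boundary separation together with the exponential decay of $G_{F_r}$ to turn the a priori infinite strip summation into a finite one. The non-parallel boundary hypothesis is precisely what guarantees the required linear growth, and once all the hypotheses of Theorem~\ref{thm:quasiauto} are verified, its conclusion directly furnishes the desired decomposition.
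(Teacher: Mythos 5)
Your proposal is correct and takes essentially the same approach as the paper: verify $\Phi,\Phi_1\in\caB_{F_r}([0,1])$ from bounded range and uniform norms, get \eqref{as:galp}--\eqref{as:gf} from Lemma~\ref{lem:gfdecay}, and establish \eqref{anan} by exploiting the angular gap between $\Gamma_1',\Gamma_2'$ and $\Gamma_1,\Gamma_2$ (linear growth of the distance to $(\Gamma_2'\setminus\Gamma_1')^c$ along the boundary strip) together with the exponential decay of $G_{F_r}$, with the non-parallelism handling the finitely many close pairs between $\Gamma_1$ and $\Gamma_2^c$. The only difference is bookkeeping: the paper decomposes the boundary strip into radial annuli $B_k$, notes $f(m,x,y)=0$ for $m<k$ there, and sums the tails $\sum_{m\ge k}G_{F_r}(m)$, whereas you fix $m$, count the $O(m)$ contributing sites, and use $\sum_m m\,G_{F_r}(m)<\infty$ --- the same estimate with the order of summation reversed.
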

\begin{proof}
\begin{figure}
	\includegraphics[width=\textwidth]{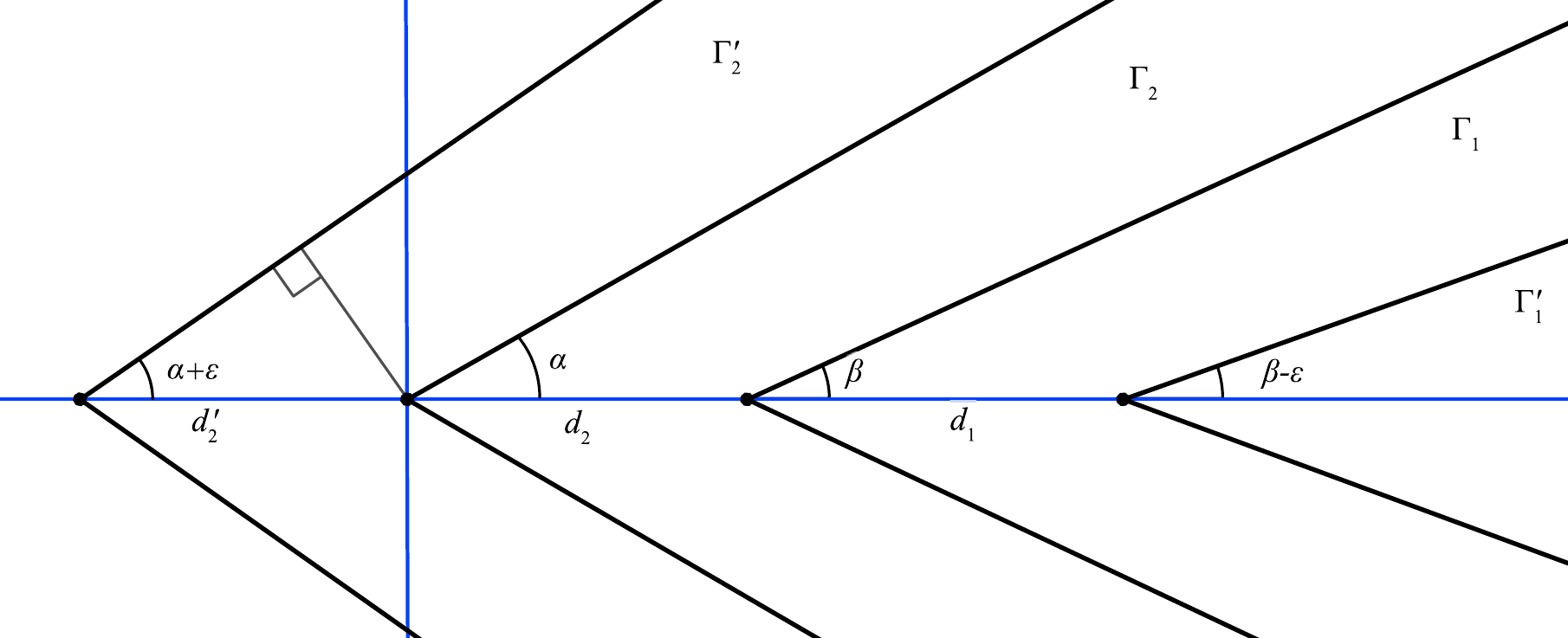}
	\caption{Cones as in Theorem~\ref{thm:conesum}.}
	\label{fig:cones}
\end{figure}
Without loss of generality we may assume that the cones $\Gamma_1$ and $\Gamma_2$ have their center line in the direction of the positive $x$-axis.
We write $\alpha$ for the opening angle of $\Gamma_2$ and $\beta$ for the opening angle $\Gamma_1$ (see Figure~\ref{fig:cones}).
The distance between their tips will be denoted by $d_2$.
Let $0 < \epsilon < \beta$ such that $\alpha + \epsilon < \pi/2$.
We can then choose cones $\Gamma_1'$ and $\Gamma_2'$ as in the figure.
Later in the proof we will provide convenient values for $d_1$ and $d_2'$, but we note that with a little extra work is is possible to show that any positive value will do.

We show that we can apply Theorem~\ref{thm:quasiauto}.
First note that $\Gamma$ is 2-regular, since the number of points in a disk of radius $r$ scales with the area.
Because the interaction range is uniformly bounded and because of 2-regularity, there are constants $C_{\#}$ and $d_\Phi$ such that $\Phi(X;t) = 0$ whenever $|X| > C_{\#}$ or $\diam(X) > d_\Phi$.
It follows that $\Phi_1 \in \caB_{F_r}([0,1])$.
With Lemma~\ref{lem:gfdecay} it is also clear that $G_{F_r}^{\alpha}$ has finite moments for $\alpha \in (0,1]$ (in the sense of equation~\eqref{as:galp}) and we can find a suitable $F$-function $\widetilde{F}$ such that equation~\eqref{as:gf} is satisfied for $F_r$.

It remains to be shown that equation~\eqref{anan} is satisfied.
As a first step we study the function $f(m,x,y)$ of equation~\eqref{eq:defnf}.
Note that the summation in the definition is over certain subsets of $X$ such that $x,y \in X$.
Hence if $d(x,y) > d_\Phi$ we have $\Phi(X;t) = 0$ and consequently $f(m,x,y) = 0$.
Similarly, the summation is only over $X$ such that $d(X, (\Gamma_2' \setminus \Gamma_1')^c) \leq m$.
Hence, it follows that $f(m, x, y) = 0$ unless $d(x, (\Gamma_2'\setminus\Gamma_1')^c) \leq m + d_\Phi$, or the same is true for $y$.
Or giving a rougher estimate, $f(m,x,y) = 0$ unless $d(x, (\Gamma_2'\setminus\Gamma_1')^c) \leq m + 2 d _\Phi$, regardless of $y$.

Now consider the case where $d(x,y) \leq d_\Phi$ and $m$ large enough such that $d(x, (\Gamma_2' \setminus \Gamma_1')^c) \leq m + 2 d_\Phi$.
In that case, we have
\begin{equation}
	\label{eq:fmxybound}
	f(m, x, y) = \sum_{x, y \ni X} |X| \sup_{t} \| \Phi(X; t) \| \leq C_{\#} M 2^{|b_0(d_\Phi)|}, 
\end{equation}
where $M := \sup_X \sup_{t \in [0,1]} \| \Phi(X;t) \|$, which is finite by assumption.
We also used translation invariance of the metric (and $\Gamma$), and that by the finite range assumption any contributing subset $X$ must be contained in $b_x(d_\Phi)$.
There are at most $2^{|b_0(d_\Phi)|}$ of such subsets, leading to the claimed bound.

Next note that Lemma~\ref{lem:gfdecay} gives us the following estimate:
\begin{equation}
	\label{eq:gsumbd}
	\sum_{m = k}^\infty G_{F_r}(m) \leq C F(k-\sqrt{2}) \sum_{m = k}^\infty m e^{-m} \leq C F(0) \frac{e^{-k+1}((e-1)k+1)}{(e-1)^2}
\end{equation}
whenever $k \geq 2$.
Note in particular the factor of $e^{-k+1}$, which will be important to guarantee convergence in our case.

We now return to equation~\eqref{anan}.
Note that $d(\Gamma_1, \Gamma_2^c) = d_2 \sin \alpha$. If this is greater than $d_\Phi$, by the remarks above the first summation (over $x \in \Gamma_1$ and $y \in \Gamma_2^c$) vanishes.
In general, since the cone $\Gamma_2$ has a wider opening angle than $\Gamma_1$, we see that there are only finitely many pairs $x \in \Gamma_1$ and $y \in \Gamma_2^c$ with $d(x,y) \leq  d_\Phi$, and hence only finitely many contributions to the summation.
Together with equations~\eqref{eq:fmxybound} and~\eqref{eq:gsumbd} it can be seen that this contribution is finite.

\begin{figure}
	\includegraphics[width=\textwidth]{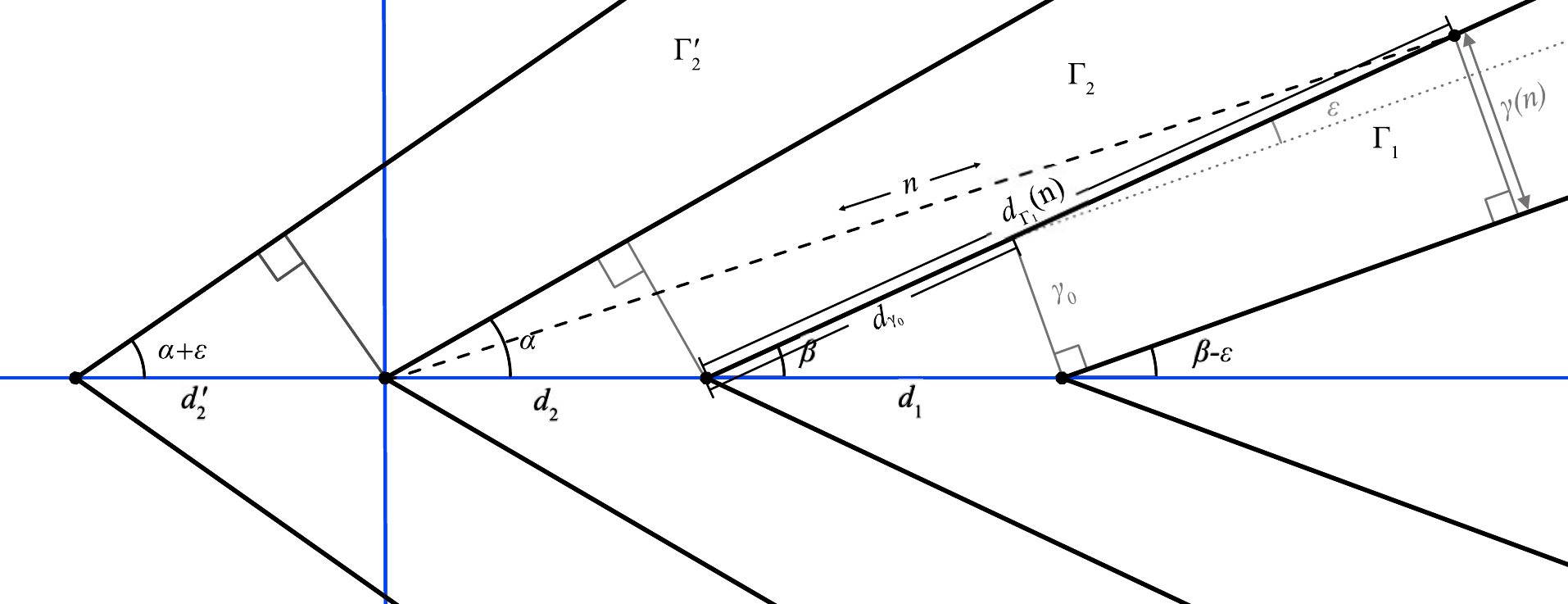}
	\caption{Definition of various distances.}
	\label{fig:distances}
\end{figure}

At this point we are left with estimating the following summation:
\begin{equation}
\sum_{x\in \Gamma_2\setminus \Gamma_1} \lmk \sum_{y\in \Gamma_2^c} + \sum_{y \in \Gamma_1} \rmk
\sum_{m=0}^\infty G_{F_r}(m)f(m,x,y),
\end{equation}
where we have split up the summation over $(\Gamma_2 \setminus \Gamma_1)^c$ into two parts.
We consider the summation over $\Gamma_2^c$, the other one can be handled in the same manner.
Note that $d(\Gamma_2, (\Gamma_2')^c) = d_2' \sin(\alpha+\epsilon)$.
Similarly, $d(\Gamma_2 \cap b_0(n)^c, \Gamma_2') = d_2' \sin(\alpha+\epsilon) + n \sin(\epsilon)$.
Write $d_{\Gamma_1}(n)$ for the distance between the tip of the cone $\Gamma_1$ and the circle of radius $n$ based on the tip of $\Gamma_2$, where we set $d_{\Gamma_1}(n) = 0$ if they do not intersect (see Fig.~\ref{fig:distances} for an idea of the various distances we need to introduce).
In case it is non-zero, we see that in fact
\[
	d_{\Gamma_1}(n) = \sqrt{n^2 - d_2^2(1-\cos^2 \beta)} - d_2 \cos \beta.
\]
Let $\gamma_0$ be the distance from the tip of $\Gamma_1'$ to the intersection of the line perpendicular to the boundary of $\Gamma_1'$ and the boundary of $\Gamma_1$.
We write $d_{\gamma_0}$ for the distance of the tip of $\Gamma_1$ to this intersection.
Then for large enough $n$ the distance of the intersection of the circle of radius $n$ with the boundary of $\Gamma_1$ and the boundary of $\Gamma_1'$ is given by
\[
	\gamma(n) = \gamma_0 + (d_{\Gamma_1}(n)-d_{\gamma_0}) \sin \epsilon.
\]
From the geometric situation we see that $d_{\Gamma_1}(n+k) - d_{\Gamma_1}(n) \geq k$, hence $\gamma(n)$  grows at least linearly in $n$.

Let $n_0$ be the smallest integer such that 
\begin{equation}
	d_0 := \min \{ d_2' \sin(\alpha+\epsilon) + n_0 \sin(\epsilon), \gamma(n_0) \} > 2 d_\Phi.
\end{equation}
Write $B_k := \left(b_0(d_0 + (k+1)/\sin(\epsilon)) \setminus b_0(d_0 + k/\sin(\epsilon)) \right)$.
We now rewrite the summation as
\[
	\left( \sum_{x \in b_0(d_0) \cap (\Gamma_2 \setminus \Gamma_1)} + \sum_{k=0}^\infty 
	\sum_{x \in B_k \cap (\Gamma_2 \setminus \Gamma_1)}  \right) \sum_{y \in \Gamma_2^c} \sum_{m = 0}^\infty G_{F_r}(m) f(m,x,y).
\]
For the first summation over all $x \in \Gamma_2 \setminus \Gamma_1$ in the ball around the origin we note that there are only finitely many such $x$.
We have already seen that for any given $x$, there are only finitely many $y$ (in fact, this number can be bounded from above independently of $x$) such that $f(m,x,y)$ is non-zero.
Again by equations~\eqref{eq:fmxybound} and~\eqref{eq:gsumbd} it follows that the first summation is finite.

For the second summation, note that if $x \in B_k \cap (\Gamma_2 \setminus \Gamma_1)$, then $d(x, (\Gamma_2')^c ) \geq k  + 2 d_\Phi$ and $d(x, \Gamma_1') \geq k + 2 d_\Phi$, and hence $d(x, (\Gamma_2' \setminus \Gamma_1')^c)) \geq k + 2 d_\Phi$.
By what we have seen earlier, this implies that $f(m, x, y) = 0$ if $m < k$ for such $x \in B_k \cap (\Gamma_2 \setminus \Gamma_1)$.
Furthermore, because of the finite range assumption, contributing pairs $x \in B_k$ and $y \in \Gamma_2^c$ must be within a ``band'' of width $d_\Phi$ around each side of the boundary of $\Gamma_2 \setminus \Gamma_1$.
It follows that we can bound the number of pairs $(x,y) \in (B_k \cap \Gamma_2 \setminus \Gamma_1) \times \Gamma_2^c$ by some constant $C_p > 0$ independent of $k$.
Putting this together we can estimate the second summation as follows.
\begin{equation}
	\begin{split}
\sum_{k=0}^\infty \sum_{x \in B_k \cap (\Gamma_2 \setminus \Gamma_1)} & \sum_{y \in \Gamma_2^c} \sum_{m = 0}^\infty G_{F_r}(m) f(m,x,y) \\ 
&\leq \sum_{k=0}^\infty C_p C_{\#} 2^{|b_0(d_\Phi)|} \sum_{m=k}^\infty G_{F_r}(m) \\
&\leq C' \sum_{k=0}^\infty e^{-k+1} ((e-1)k +1) < \infty 
\end{split}
\end{equation}
for some $C' > 0$.
Here we again used the estimates~\eqref{eq:fmxybound} and~\eqref{eq:gsumbd}.
This completes the proof.
\end{proof}

We expect that with a more careful analysis one could allow for more general interactions, as long as they decay sufficiently fast.
It does however seem necessary that that $\Gamma_2'$ has a bigger opening angle than $\Gamma_2$, so that towards infinity the distance between their respective boundaries grows.
This is necessary to ensure that for $x,y$ far from the origin, $f(m,x,y)$ is non-zero only for large $m$.
Together with the decay properties of $G_F$ of Lemma~\ref{lem:gfdecay} this ensures that the sum converges.

The following now follows immediately from the theorem, by using Proposition~\ref{prop:splitstable}.
\begin{corollary}
Let $\calA_\Gamma$ and $t \mapsto \Phi(X;t)$ be as in Theorem~\ref{thm:conesum} and  $\tau^\Phi_{t,s}$ the corresponding quasi-local dynamics.
Assume that $\Gamma_1 \subset \Gamma_2$ is an inclusion of cones such that their borders are sufficiently far away and in the representation $\pi$ of $\calA_\Gamma$ we have the split property with respect to these cones.
Then there exist cones $\Gamma_1' \subset \Gamma_1$ and $\Gamma_2' \supset \Gamma_2$ such that $\pi \circ \tau^\Phi_{1,0}$ satisfies the split property with respect to $\Gamma_1' \subset \Gamma_2'$.
\end{corollary}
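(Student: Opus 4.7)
The plan is to combine the factorization theorem (Theorem~\ref{thm:quasiauto}), whose hypotheses are verified by Theorem~\ref{thm:conesum}, with the stability result for the split property (Proposition~\ref{prop:splitstable}). The main content is simply checking that the output of Theorem~\ref{thm:quasiauto} can be fed into Proposition~\ref{prop:splitstable}.

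First, by Theorem~\ref{thm:conesum}, given the inclusion $\Gamma_1 \subset \Gamma_2$ we can choose cones $\Gamma_1' \subset \Gamma_1$ and $\Gamma_2 \subset \Gamma_2'$ so that the hypotheses of Theorem~\ref{thm:quasiauto} are met (with the $F$-function $F_r$ constructed there). Applying Theorem~\ref{thm:quasiauto} then yields an automorphism $\beta_{\Gamma_2' \setminus \Gamma_1'}$ of $\caA_{\Gamma_2' \setminus \Gamma_1'}$ and a unitary $u \in \caA_\Gamma$ with
\[
    \tau_{1,0}^{\Phi} = \Ad(u) \circ \tau_{1,0}^{\Phi^{(0)}} \circ \widetilde{\beta}_{\Gamma_2'\setminus \Gamma_1'},
\]
where $\Phi^{(0)}$ is the decoupled interaction built from $\Phi$.

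Next, I would observe that by the very definition of $\Phi^{(0)}$, the local Hamiltonians decompose as a sum of three mutually commuting pieces localized in $\Gamma_1$, $\Gamma_2 \setminus \Gamma_1$, and $\Gamma_2^c$ respectively. By the uniqueness of solutions to the Heisenberg equation (and passing to the thermodynamic limit via Theorem~\ref{tni}\ref{it:tdlimit}), this gives a genuine factorization
\[
    \tau_{1,0}^{\Phi^{(0)}} = \alpha_{\Gamma_1} \otimes \alpha_{\Gamma_2 \setminus \Gamma_1} \otimes \alpha_{\Gamma_2^c},
\]
with $\alpha_{\Gamma_1}$, $\alpha_{\Gamma_2\setminus \Gamma_1}$, $\alpha_{\Gamma_2^c}$ automorphisms of the corresponding quasi-local algebras. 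Combining with the previous identity, $\tau_{1,0}^{\Phi}$ takes exactly the form required in the hypothesis of Proposition~\ref{prop:splitstable}, with $(\Gamma_0, \Gamma_3) = (\Gamma_1', \Gamma_2')$ and $\beta_{\Gamma_3 \setminus \Gamma_0} = \beta_{\Gamma_2' \setminus \Gamma_1'}$.

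Finally, since the split property for a representation $\pi$ is exactly the split property for the state $\omega := \langle \Omega, \pi(\cdot) \Omega \rangle$ associated with any cyclic vector (or more intrinsically an assertion about the inclusion $\pi(\caA_{\Gamma_1})'' \subset \pi(\caA_{\Gamma_2})''$), Proposition~\ref{prop:splitstable} applies and yields the interpolating Type~I factor $\widetilde F = \Ad(\pi(u))(F)$ with
\[
    (\pi \circ \tau_{1,0}^{\Phi})(\caA_{\Gamma_1'})'' \subset \widetilde F \subset (\pi \circ \tau_{1,0}^{\Phi})(\caA_{\Gamma_2'})''.
\]
This is the desired split property of $\pi \circ \tau_{1,0}^\Phi$ with respect to $\Gamma_1' \subset \Gamma_2'$. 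The only substantive step is the verification of the tensor product decomposition of $\tau_{1,0}^{\Phi^{(0)}}$; everything else is plugging previously established machinery together, and I do not anticipate a real obstacle.
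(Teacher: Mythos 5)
Your proposal is correct and follows essentially the same route as the paper, which derives the corollary directly by feeding the factorization from Theorem~\ref{thm:conesum}/Theorem~\ref{thm:quasiauto} into Proposition~\ref{prop:splitstable}; the tensor-product decomposition of $\tau_{1,0}^{\Phi^{(0)}}$ that you single out as the only substantive check is exactly the observation the paper itself uses (it appears explicitly in the proof of the subsequent corollary on quasi-factorizability), and it holds because every term of $\Phi^{(0)}$ is strictly localized in one of $\Gamma_1$, $\Gamma_2\setminus\Gamma_1$, $\Gamma_2^c$.
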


Finally, it allows us to construct examples of quasi-factorizable automorphisms.
\begin{corollary}
Let $\alpha = \tau_{0,1}^\Phi$, with $\Phi$ as in Theorem~\ref{thm:conesum}.
Then, for every cone $\Lambda$, we can find cones $\Gamma_1' \subset \Lambda \subset \Gamma_2'$ such that $\alpha$ is quasi-factorizable with respect to this inclusion.
\end{corollary}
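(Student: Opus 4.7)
My plan is to apply Theorem~\ref{thm:conesum} and Theorem~\ref{thm:quasiauto} to the given cone $\Lambda$, and then observe that the resulting factorization already matches the form required by Definition~\ref{def:quasifactor}, after grouping two of the three factors appropriately.

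More concretely, given the cone $\Lambda$, I would set $\Gamma_1 := \Lambda$ and pick an auxiliary cone $\Gamma_2 \supsetneq \Lambda$ sharing the same axis but with a strictly wider opening angle, so that the boundary lines of $\Gamma_1$ and $\Gamma_2$ are not parallel and the tips can be placed a positive distance apart, thus satisfying the hypothesis of Theorem~\ref{thm:conesum}. That theorem then produces cones $\Gamma_1' \subset \Gamma_1 = \Lambda \subset \Gamma_2 \subset \Gamma_2'$ for which the hypotheses of Theorem~\ref{thm:quasiauto} hold. Applying Theorem~\ref{thm:quasiauto} yields an automorphism $\beta_{\Gamma_2'\setminus \Gamma_1'}$ of $\calA_{\Gamma_2'\setminus \Gamma_1'}$ and a unitary $u \in \calA_\Gamma$ with
\begin{equation*}
\alpha = \tau_{1,0}^\Phi = \Ad(u) \circ \tau_{1,0}^{\Phi^{(0)}} \circ \tilde\beta_{\Gamma_2'\setminus \Gamma_1'}.
\end{equation*}

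Because $\Phi^{(0)}$ is zero on any subset that meets more than one of the regions $\Gamma_1,\, \Gamma_2\setminus \Gamma_1,\, \Gamma_2^c$, the dynamics $\tau_{1,0}^{\Phi^{(0)}}$ is a tensor product of three automorphisms localized in these regions respectively, say $\alpha_{\Gamma_1} \otimes \alpha_{\Gamma_2 \setminus \Gamma_1} \otimes \alpha_{\Gamma_2^c}$. With the choice $\Gamma_1 = \Lambda$, the complement decomposes as the disjoint union $\Lambda^c = (\Gamma_2 \setminus \Lambda) \sqcup \Gamma_2^c$, so I can define $\alpha_\Lambda := \alpha_{\Gamma_1}$ and $\alpha_{\Lambda^c} := \alpha_{\Gamma_2\setminus \Gamma_1} \otimes \alpha_{\Gamma_2^c}$, obtaining $\tau_{1,0}^{\Phi^{(0)}} = \alpha_\Lambda \otimes \alpha_{\Lambda^c}$. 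Taking $\Xi := \beta_{\Gamma_2'\setminus \Gamma_1'}$ then gives precisely the decomposition $\alpha = \Ad(u) \circ \widetilde{\Xi} \circ (\alpha_\Lambda \otimes \alpha_{\Lambda^c})$ required by Definition~\ref{def:quasifactor} for the inclusion $\Gamma_1' \subset \Lambda \subset \Gamma_2'$.

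There is essentially no substantive obstacle to overcome, since the heavy lifting has been done in Theorems~\ref{thm:quasiauto} and~\ref{thm:conesum}; the only point requiring a bit of care is verifying that an auxiliary cone $\Gamma_2$ with the required geometric properties (wider opening angle, non-parallel boundary, appropriate separation of tips) can always be chosen around an arbitrary given $\Lambda$, which is immediate by a translation/rotation and a small angular perturbation of $\Lambda$. The equally trivial point that $\Lambda^c$ genuinely splits into $(\Gamma_2\setminus\Lambda)\sqcup\Gamma_2^c$ as a disjoint union of subsets of $\Gamma$, so that the tensor product structure of $\calA_{\Lambda^c}$ is respected, completes the argument.
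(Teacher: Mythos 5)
There is a genuine gap, and it sits exactly at the point you dismiss as requiring no care: the order of composition in Definition~\ref{def:quasifactor}, and correspondingly the direction of time in $\alpha$. The corollary is about $\alpha=\tau_{0,1}^{\Phi}$, i.e.\ the \emph{inverse} of the automorphism $\tau_{1,0}^{\Phi}$ that Theorem~\ref{thm:quasiauto} factorizes, but you write $\alpha=\tau_{1,0}^{\Phi}=\Ad(u)\circ\tau_{1,0}^{\Phi^{(0)}}\circ\tilde\beta_{\Gamma_2'\setminus\Gamma_1'}$ and then claim this is ``precisely'' of the form $\Ad(u)\circ\widetilde{\Xi}\circ(\alpha_\Lambda\otimes\alpha_{\Lambda^c})$. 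It is not: in the theorem's factorization the boundary automorphism $\tilde\beta_{\Gamma_2'\setminus\Gamma_1'}$ is applied \emph{first} (rightmost) and the tensor-product automorphism second, whereas Definition~\ref{def:quasifactor} requires the tensor product first and $\widetilde{\Xi}$ second. Since $\beta_{\Gamma_2'\setminus\Gamma_1'}$ lives on $\Gamma_2'\setminus\Gamma_1'$, a region on which $\alpha_{\Gamma_1}$, $\alpha_{\Gamma_2\setminus\Gamma_1}$ and $\alpha_{\Gamma_2^c}$ all act non-trivially, the two orders are genuinely different; and you cannot repair it by conjugating, because $(\alpha_\Lambda\otimes\alpha_{\Lambda^c})\circ\tilde\beta\circ(\alpha_\Lambda\otimes\alpha_{\Lambda^c})^{-1}$ need no longer be localized in $\Gamma_2'\setminus\Gamma_1'$ (e.g.\ $\alpha_\Lambda^{-1}$ spreads $\caA_{\Gamma_1'}$ into all of $\caA_\Lambda$). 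The ordering is not cosmetic: it is exactly what makes Theorem~\ref{thm:invariant} work, since for $A\in\caA_{\Lambda^c}$ one needs $\widetilde{\Xi}(\alpha_{\Lambda^c}(A))\in\caA_{\Gamma_1'^c}$.

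The paper's proof resolves both points in one move: it takes inverses of equation~\eqref{eq:quasifactor}, which simultaneously converts the statement to $\alpha=\tau_{0,1}^{\Phi}$ and reverses the composition order, giving $\alpha=\Ad(\tilde u)\circ\tilde\beta_{\Gamma_2'\setminus\Gamma_1'}^{-1}\circ\tau_{0,1}^{\Phi^{(0)}}$ after moving the conjugation to the front (at the cost of replacing $u$ by a new unitary $\tilde u\in\caA_\Gamma$); this \emph{is} of the required form, with $\Xi$ built from $\beta_{\Gamma_2'\setminus\Gamma_1'}^{-1}$ (the paper also absorbs $\alpha_{\Gamma_2\setminus\Gamma_1}$ into $\Xi$, though your grouping of it into $\alpha_{\Lambda^c}$ would be equally fine once the ordering is correct, since $\Gamma_2\setminus\Gamma_1\subset\Lambda^c$). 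Your geometric setup (choosing $\Gamma_2$ as a wider cone around $\Lambda$ so that Theorem~\ref{thm:conesum} applies) and the tensor decomposition of $\tau^{\Phi^{(0)}}$ agree with the paper; the missing idea is the inversion step, which is the actual content of the corollary's proof and the reason it is stated for $\tau_{0,1}^{\Phi}$ rather than $\tau_{1,0}^{\Phi}$.
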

\begin{proof}
We will apply Theorem~\ref{thm:quasiauto}; we shall see later why the conditions are satisfied.
Suppose that the cone $\Lambda$ has opening angle $\theta$.
Fix some cone $\Lambda_0$ which has the same apex and central axis as
$\Lambda$ but with a larger angle $\theta_0>\theta$, satisfying
$\Lambda\subset \Lambda_0$.
Set $\Gamma_1:=\Lambda$, $\Gamma_2:=\Lambda_0$. 
Then, by Theorem~\ref{thm:conesum}, there are cones $\Gamma_1' \subset \Gamma_1$ and $\Gamma_2' \supset \Gamma_2$ such that the conditions of Theorem~\ref{thm:quasiauto} are satisfied.
Recall that $\alpha=\tau_{0,1}^{\Phi}$.
Then $\tau_{0,1}^{\Phi^{(0)}}$, in the notation of Theorem~\ref{thm:quasiauto}, decomposes as
\begin{align}
\tau_{0,1}^{\Phi^{(0)}}
=\alpha_{\Gamma_1}\otimes\alpha_{\Gamma_2\setminus \Gamma_1}\otimes
\alpha_{\Gamma_2^c}
\end{align}
where $\alpha_{\Gamma_1} \in \operatorname{Aut}(\calA_{\Gamma_1})$, and similar for the others.
Moreover, by noting that $u \in \calA_\Gamma$ and taking inverses on both sides of equation~\eqref{eq:quasifactor}, we obtain from Theorem~\ref{thm:quasiauto} that there is $\widetilde{u} \in \calA$ such that
\begin{align*}
    \alpha &=\tau_{0,1}^{\Phi}=\Ad(\tilde u)\circ 
\lmk\widetilde \beta_{\Gamma_2'\setminus\Gamma_1'}^{-1}\circ \tau_{0,1}^{\Phi^{(0)}}
\rmk\\
         &=\Ad(\tilde u)\circ\lmk\widetilde \beta_{\Gamma_2'\setminus\Gamma_1'}^{-1}\circ\alpha_{\Gamma_2\setminus \Gamma_1}\rmk\circ \lmk
\alpha_{\Gamma_1}\otimes\alpha_{\Gamma_2^c}
\rmk \\
         &= \Ad(\widetilde{u}) \circ \widetilde{\Xi} \circ (\alpha_{\Lambda} \otimes \alpha_{\Lambda^c}).
\end{align*}
Here, $\Xi:=\widetilde \beta_{\Gamma_2'\setminus\Gamma_1'}^{-1}\circ\alpha_{\Gamma_2\setminus \Gamma_1}$
is an automorphism on $\caA_{\Gamma_2'}=\caA_{\Lambda_0}$,
$\alpha_\Lambda:=\alpha_{\Gamma_1}$ is an automorphism on
$\caA_{\Lambda}=\caA_{\Gamma_1}$,
and $\alpha_{\Lambda^c}:= \alpha_{\Gamma_2^c} \otimes \operatorname{id}_{\Lambda \setminus \Gamma_2}$
is an automorphism on 
$\caA_{\Lambda^c}$.
\end{proof}

\bibliographystyle{alpha}
\bibliography{References.bib}

\end{document}